\newcommand{\TE}{\mathit{TR}}
\newcommand{\mydot}{\mathrel{\scriptscriptstyle\diamond}}
\newcommand{\TC}{\mathit{TC}}
\newcommand{\TR}{\mathit{TR}}
        \newtheorem{theorem}{Theorem}
        \newtheorem{proposition}{Proposition}
        \newtheorem{lemma}{Lemma}
         \newtheorem{definition}{Definition}
\newcounter{example}
\newenvironment{example}[1][]{\refstepcounter{example}\par\medskip
   \noindent \textit{Example~\theexample. #1} \rmfamily}{\medskip}
\newcommand{\Conf}{\mathit{Conf}}
\title{Comparative Transition System Semantics for\\ Cause-Respecting Reversible Prime Event Structures}
\author{Nataliya Gribovskaya \qquad Irina Virbitskaite
\institute{A.P. Ershov Institute of Informatics Systems\\
the Siberian Branch of the Russian Academy of Sciences\\
6, Acad. Lavrentiev avenue,  630090, Novosibirsk, Russia}
\email{\{natamosk,virbitskaite\}@gmail.com}
}
\begin{document}
\maketitle

\begin{abstract}
Reversible computing is a new paradigm that has emerged recently and extends the traditional forwards-only computing mode with the ability to execute in backwards,
so that computation can run in reverse as easily as in forward.
Two approaches to developing transition system (automaton-like) semantics for event structure models are distinguished in the literature.
In the first case, states are considered as configurations (sets of already executed events),
and transitions between states are built by starting from the initial configuration and repeatedly adding executable events.
In the second approach, states are understood as residuals (model fragments that have not yet been executed),
and transitions are constructed by starting from the given event structure as the initial state and deleting already executed (and conflicting) parts thereof during execution.
The present paper focuses on an investigation of how the two approaches are interrelated for the model of prime event structures extended with cause-respecting reversibility.
The bisimilarity of the resulting transition systems is proved, taking into account step semantics of the model under consideration.
\end{abstract}
\section{Introduction}

Reversible computations, extensively studied during in recent years, is an unconventional form of computations that can be performed in the forward direction as easily as in the reverse direction.
Any sequence of actions executed by the system can subsequently be canceled for some reason (for example, in case of an error),
which allows the system to restore previous consistent states, as if these canceled actions were not executed at all.
Reversible computing is attracting interest for its applications in many fields including
program analysis and debugging \cite{LPV19},
programming abstractions for reliable systems \cite{DK05,MMPY20},
modelling biochemical reactions \cite{KACPPU20},
hardware design and quantum computing \cite{DDV18}, and etc.

Despite the fact that reversing computations in concurrent/distributed systems has many promising applications, it also involves many technical and conceptual challenges.
One of the most essential issues that arise concerns the techniques that should be applied when moving backwards.
Several different styles of the undoing of computation have been identified recently.
The most prominent of these are backtracking \cite{PP20}, causal reversibility \cite{MMPPU20,PP20}, and out-of-causal-order reversibility \cite{PP20,PUY13},
that differ in the order of executing actions in backward direction.
Backtracking is generally understood as the ability to execute past actions in the exact reverse order in which they were executed.
Causal reversibility in concurrent systems means that actions that cause others can only be undone after the caused actions are undone first,
and that actions which are independent of each other can be reversed in an arbitrary order.
Out-of-causal reversibility, a form of reversal most characteristic of biochemical systems, does not preserve causes.
The interplay between reversibility and concurrency has been widely studied in various models:
parallel rewriting systems \cite{AC18}, cellular automata \cite{K18}, process calculi \cite{DK05,LMS16}, Petri nets \cite{BGMPPP22,FKM19,PP20},
event structures \cite{MMP21,PU15,UPY18}, membrane systems \cite{P18}, and etc.

Event structures are a well-established model of concurrency.
They were originally proposed by Winskel in his PhD dissertation \cite{W80} and
were considered as an intermediate abstraction between Scott domains (i.e., a denotational model) and Petri nets (i.e., an operational model).
Basically, event structures are collections of possible events, some of which are conflicting (i.e., the execution of an event forbids the execution of other events),
while others are causally dependent (i.e., an event cannot be executed if it has not been preceded by other ones), and
events that are neither in causal dependency nor in conflict are treated as concurrent.
Events are often labelled with actions, to represent different occurrences of the same action.
Prime Event Structures (written PESs) are the earliest and simplest form of event structures,
where causality is a partial order and conflict between events is inherited by their causal successors.
The association of transition system (automaton-like) models with event structures has proved to contribute to studying and solving various problems in the analysis and verification of concurrent systems.
It is distinguished two methods of providing transition system semantics for event structures: a configuration-based and a residual-based method.
In the first case (see \cite{AKPN15,ABG16,GP09,HKT96,K96,W80,W89} among others),
states are understood as sets of events, called configurations,
and state transitions are built by starting with the initial configuration and enlarging configurations by already executed events.
In the second more `structural' method (see \cite{BM94,B90,CVY12,K96,L93} among others),
states are understood as event structures,
and transitions are built by starting with the given event structure as an initial state and removing already executed (and conflicting) parts thereof in the course of execution.
In the literature, configuration-based transition systems seem to be predominantly used as the semantics of event structures,
and residual-based transition systems are actively used in providing operational semantics of process calculi
and in demonstrating the consistency of operational and denotational semantics.
The two kinds of transition systems have occasionally been treated alongside each other (see \cite{K96} as an example),
but their general relationship has not been studied for a wide range of existing models.
In a seminal paper, viz. \cite{MR98}, bisimulations between configuration-based and residual-based transition systems have been proved to exist for prime event structures \cite{W89}.
The result of \cite{MR98} has been extended in \cite{BGV17} to more complex event structure models with asymmetric conflict.
The paper \cite{BGV18} demonstrated that when using non-executable events,
the removal operators defined in \cite{MR98,BGV17} to obtain residuals can be tightened in such a way that isomorphisms, rather than just bisimulations,
between the two types of transition systems belonging to a single event structure can be obtained, for a full spectrum of semantics (interleaving, step, pomset, multiset).

Reversible event structures extend event structures to represent reversible computational processes, capable of undoing executed actions by allowing configurations to evolve by eliminating events.
In \cite{PU15,UPY18}, Phillips et. al. determined causal and out-of-causal reversible forms of prime, asymmetric and general event structures
and showed the correspondence between their configurations and traditional ones when there are no reversible events.
In \cite{AC17}, Aubert and Cristescu have provided a true concurrent semantics of a reversible extension of CCS, RCCS (without auto-concurrency, auto-conflict, or recursion), in terms of configuration structures.
In \cite{GPY21}, Graversen et. al. have developed a category of reversible bundle event structures with symmetric conflict
and used the causal subcategory to model semantics of another reversible extension of CCS, CCSK.
They also modified CCSK to control reversibility with a rollback primitive,
and gave, by exploiting the capacity for out-of-causal reversibility, semantics of this kind of CCSK in terms of reversible bundle event structures with asymmetric conflict.
Constructions associating causal reversible prime event structures to reversible occurrence nets and vice versa have been proposed within causal reversibility in \cite{MMPPU20},
as well as within out-of-causal reversibility in \cite{MMP21}.

The aim of this paper is to identify two (configuration-based and residual-based) types of transition system semantics for cause-respecting reversible prime event structures
and to understand how these types relate to each other, which can assist in the construction of algebraic calculi to describe and verify reversible concurrent processes.

This paper is structured as follows.
In Section 2, we start with recalling the syntax of prime and reversible prime event structures and their (step) semantics in terms of configurations and traces.
In Section 3, we define a removal operator, which is useful for constructing model residuals, and demonstrate the correctness of the operator.
In Section 4, we develop two types of transition system semantics for cause-respecting reversible prime event structures and establish bisimulation results between the semantics.
In Section 5, we provide some concluding remarks.
The proofs of the propositions presented here can be found at
www.iis.nsk.su/virb/proofs-AFL-2023.

\section{Reversing in Prime Event Structures}
In this section,
we first recall the notion of prime event structures (PESs) \cite{W80} labeled over the set $L=\{a,b,c,\ldots\}$ of actions,
and then formulate the concept of reversible prime event structures (RPESs) \cite{PU15} and consider their (step) semantics and properties.

The behavior of concurrent systems is formally modelled by event structure models where units of the behavior are represented by events.
There are different ways to relate events.
In prime event structures (PESs),
the dependency between events, called causality, is given by a partial order, and the incompatibility is determined by a conflict relation.
Two events which are neither in causal dependency nor in conflict are considered independent (concurrent).

\begin{definition}
A (labeled) {\em prime event structure (PES)} (over the set $L$ of actions) is a tuple $\mathcal{E}=(E$, $<$, $\sharp$, $l$, $C_0)$, where
\begin{itemize}
\item $E$ is a countable set of {\em events};
\item $< \ \subseteq E \times E$ is an irreflexive partial order (the {\em causality relation}) satisfying the {\em principle of finite causes}:
$\forall e\in E\mydot \lfloor e\rfloor = \{e'\in E \mid e'< e\}$ is finite;
\item $\sharp \subseteq E \times E$ is an irreflexive and symmetric relation (the {\em conflict relation}) satisfying the {\em principle of hereditary conflict}:
$\forall e, e', e''\in E\mydot e < e'$ and $e\ \sharp\ e''$ then $e'\ \sharp\ e''$;
\item $l:E \to L$  is a {\em labeling function};
\item $C_0=\emptyset$ is the {\em initial configuration}\footnote{We add the initial configuration as an empty set to the classical PES definition,
but this does not affect the behavior of the structure in any way,
because the PES progresses by moving from one configuration to another and starting from an empty set.}.
\end{itemize}
\end{definition}

So, the PES is a simple event-based model of concurrent and nondeterministic computations
where events labeled over the set $L$ of actions are considered as atomic, indivisible and instantaneous action occurrences,
some of which can only be executed after another (i.e. there is a causal dependency represented by a partial order $\leq$ between the events) and
some of which might not be executed together (i.e. there is a binary conflict $\sharp$ between the events).
In addition, the principle of finite causes and the principle of conflict inheritance are required.

The PES progresses by executing events, thus moving from one state to another, starting from the initial state, which is an empty set.
A state called a configuration is a set of events that have occurred. 
A subset of events $X \subseteq E$ is {\em left-closed under $<$} iff for all $e \in X$ it holds that $\lfloor e\rfloor \subseteq X$;
is {\em conflict-free} iff for all $e, e'\in  X$ it holds that $\neg (e\ \sharp\ e')$, and we denote it with $CF(X)$.
A subset $C\subseteq E$ is a {\em configuration} of $\mathcal{E}$ iff $C$ is finite, left-closed under $<$ and conflict-free.

Reversible prime event structures (RPESs) \cite{PU15,UPY18} are based on a weaker form of PESs because conflict inheritance may not hold when adding reversibility to PESs.
Also, in RPESs, some events are categorised as reversible, and two relations are added: the reverse causality relation and the prevention relation.
The first one is a dependency relation in the backward direction:
to reverse an event in the current configuration there must be other events on which the event reversibly depends.
The second relation, on the contrary, identifies those events whose presence in the current configuration prevents the event being reversed.

\begin{definition}\label{def_RPES}
A (labeled) {\em reversible prime event structure (RPES)} (over $L$) is a tuple $\mathcal{E}=(E$, $<$, $\sharp$,  $l$, $F$, $\prec$, $\rhd$, $C_0)$, where
\begin{itemize}
\item
$E$ is a countable set of {\em events};
\item
$\sharp \subseteq E \times E$ is an irreflexive and symmetric relation (the {\em conflict relation});
\item
$< \subseteq E \times E$ is an irreflexive partial order (the {\em causality relation}) satisfying:
$\lfloor e\rfloor$ is finite and conflict-free, for every $e \in E$;
\item
$l:E \to L$  is a {\em labeling function};
\item
$F \subseteq E$ are {\em reversible events} being denoted by the set $\underline{F} = \{\underline{e}\ \mid\ e \in  F\}$ such that $\underline{F}\cap E=\emptyset$;

\item $\prec \subseteq E \times \underline{F}$ is the {\em reverse causality relation} satisfying:
$a \prec \underline{a}$ and $\{e \in E\ \mid\ e \prec \underline{a}\}$ is finite and conflict-free, for every $a \in F$;

\item $\rhd \subseteq E\times \underline{F}$ is the {\em prevention relation} such that $\rhd \cap\prec=\emptyset$;

\item
$\ll$ is the transitive {\em sustained causation relation}: $a \ll b$ is defined to mean that $a < b$ and if $a \in F$ then $b \rhd \underline{a}$.\
$\sharp$ is {\em hereditary w.r.t. the sustained causation} $\ll$: if $a\ \sharp\ b\ll c$ then $a\ \sharp\ c$;

\item
$C_0\subseteq E$ is the {\em initial configuration} which is finite, left-closed under $<$ and conflict-free.
\end{itemize}
\end{definition}

It is straightforward to check that any PES is also an RPES with $F=\emptyset$ and $C_0=\emptyset$.
Then, any concept defined for RPESs applies to PESs as well.

\begin{example}\label{0.examp}
Consider the structure $\mathcal{E}_0=(E_0$, $<_0$, $\sharp_0$, $l_0$, $F_0$, $\prec_0$, $\rhd_0$, $C^0_0)$,
where $E_0=\{a,b,c,d,e\}$; $<_0 = \{(b, d), (c,e)\}$;  $\sharp_0 = \{(a,b), (b,a), (b,c), (c,b)\}$; $l_0$ is the identical function;
$F_0=\{b,c\}$; $\prec_0 = \{(b, \underline{b}), (c, \underline{c})\}$; $\rhd_0 = \emptyset$; $C^0_0 = \emptyset$.
It is easy to make sure that the components of the structure $\mathcal{E}_0$ meet the requirements of the corresponding items of Definition~\ref{def_RPES}.
In particular, we see that
$\prec_0 = \{(b, \underline{b}), (c, \underline{c})\}$ and $(b, \underline{b}), (c, \underline{c})\not\in\rhd_0$.
Notice that $\sharp_0$ is not hereditary w.r.t. $<_0$
because $a\ \sharp_0\ b<_0 d$ and $\neg(a\ \sharp_0\ d)$, $b\ \sharp_0\ c<_0 e$ and $\neg(b\ \sharp_0\ e)$, $c\ \sharp_0\ b<_0 d$ and $\neg(c\ \sharp_0\ d)$.
From Definition~\ref{def_RPES}, we know that $x$ and $y$ are in the sustained causation relation iff $x$ causes $y$, and $x$ cannot be reversed as long as $y$ is present.
In $\mathcal{E}_0$, the pairs $(b, d)$ and $(c,e)$ are in the causality relation $<_0$, and the prevention relation $\rhd_0$ is empty.
Therefore, the sustained causation relation $\ll_0$ is empty.
It is easy to see that $\sharp_0$ is hereditary w.r.t. $\ll_0$.
So, the structure $\mathcal{E}_0$ is indeed an RPES.
\hfill$\Diamond$
\end{example}

The RPES progresses by executing events and/or by undoing previously executed events, thus moving from one configuration to another.
The act of moving is a computation step.
Reachable configurations are subsets of events which can be reached from the initial configuration by executing computation steps.
A sequence of computation steps is a trace of the RPES.
\begin{definition}\label{def_conf}
Given an RPES $\mathcal{E} = (E, <, \sharp,  l, F, \prec, \rhd, C_0)$, and $C\subseteq E$ such that $CF(C)$,
\begin{itemize}
\item
for $A \subseteq E$ and $B \subseteq F$, we say that $A \cup \underline{B}$ is {\em enabled at $C$} if
\begin{itemize}
\item[a)] $A\cap C=\emptyset$, $B \subseteq C$, $CF(C \cup A)$;
\item[b)] $\forall e \in A$, $\forall e' \in E$ $\colon$ if $e' < e$ then $e' \in (C \setminus B)$;
\item[c)] $\forall e \in B$, $\forall e' \in E$ $\colon$ if $e' \prec \underline{e}$ then $e' \in (C \setminus (B \setminus \{e\}))$;
\item[d)] $\forall e \in B$, $\forall e' \in E$ $\colon$ if $e' \rhd \underline{e}$ then $e' \not\in (C \cup A)$.
\end{itemize}
If $A \cup \underline{B}$ is enabled at $C$ then $C \stackrel{A\cup \underline{B}}{\longrightarrow} C'=(C\setminus B)\cup A$.
We shall write $l(A\cup\underline{B})=M$ iff $M$ is a multiset over the set $L$ of actions, defined as follows:
$M(a)=\mid \{e\in (A \cup B)\mid l(e)=a\}\mid$ for all $a\in L$.
\item
$C$ is a {\em forwards reachable configuration of $\mathcal{E}$} (from $C_0$) iff
for all $i = 1, \ldots, n$ $(n\geq0)$, there exists a finite set $A_i \subseteq E$ such that $C_{i-1} \stackrel{A_i\cup \underline{\emptyset}_i} \longrightarrow  C_{i}$ and $C_n=C$.
\item
$C$ is a {\em (reachable) configuration of $\mathcal{E}$} (from $C_0$) iff
for all $i = 1, \ldots, n$ $(n\geq0)$, there exist finite sets $A_i \subseteq E$ and $B_i \subseteq F$ such that $C_{i-1} \stackrel{A_i\cup \underline{B}_i} \longrightarrow  C_{i}$ and $C_n=C$.
In this case, $t=(A_1\cup\underline{B}_1)\ldots (A_n\cup\underline{B}_n)$ $(n\geq0)$ is a trace of $\mathcal{E}$ and $last(t)=C_n$.
The set of (reachable) configurations of $\mathcal{E}$ is denoted by $Conf(\mathcal{E})$, and
the set of traces of $\mathcal{E}$ --- by $Traces(\mathcal{E})$.
Clearly, any configuration $C\in Conf(\mathcal{E})$ is conflict-free, and
any prefix of any trace $t\in Traces(\mathcal{E})$ belongs to $Traces(\mathcal{E})$.
\item
Two traces $t=(A_1\cup\underline{B}_1)\ldots (A_n\cup\underline{B}_n)$ $(n\geq0)$ and
$t'=(A'_1\cup\underline{B'}_1)\ldots (A'_m\cup\underline{B'}_m)$ $(m\geq0)$ of $\mathcal{E}$ are called to be {\em equivalent w.r.t. $\sim$} (denoted $t \sim t'$) iff $last(t)=last(t')$.
\end{itemize}
\end{definition}

The last two items of Definition~\ref{def_conf} lead to the following auxiliary

\begin{lemma}\label{lem_0}
Given an RPES $\mathcal{E} = (E, <, \sharp,  l, F, \prec, \rhd, C_0)$, it holds:
\begin{itemize}
\item[(i)] $\{last(t)\mid t\in Traces(\mathcal{E})\}=\Conf(\mathcal{E})$;
\item[(ii)] for any $t\in Traces(\mathcal{E})$,
if $t(A \cup \underline{B})\in Traces(\mathcal{E})$ then $last(t)\stackrel{A\cup\underline{B}}{\rightarrow}last(t(A \cup \underline{B}))$;
\item[(iii)] for any $t,t'\in Traces(\mathcal{E})$,
if $last(t)\stackrel{A\cup \underline{B}}{\rightarrow} last(t')$ then $t(A \cup \underline{B})\in Traces(\mathcal{E})$ and $t(A \cup \underline{B})\sim t'$.
\end{itemize}
\end{lemma}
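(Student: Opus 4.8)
The plan is to prove all three items by directly unwinding Definition~\ref{def_conf}, the key observation being that the operational transition relation $\stackrel{A\cup\underline{B}}{\rightarrow}$ is precisely the relation linking consecutive configurations along a trace. Throughout I would keep a clean distinction between a trace, which is a sequence of step-labels $(A_1\cup\underline{B}_1)\ldots(A_n\cup\underline{B}_n)$, and the sequence of configurations $C_0,C_1,\ldots,C_n$ it induces via $C_{i-1}\stackrel{A_i\cup\underline{B}_i}{\rightarrow}C_i$, with $last(t)=C_n$.

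For item (i) I would prove the two inclusions separately. If $C=last(t)$ for some $t\in Traces(\mathcal{E})$, then the configuration sequence induced by $t$ is exactly a witness that $C$ is reachable from $C_0$, so $C\in\Conf(\mathcal{E})$. Conversely, if $C\in\Conf(\mathcal{E})$, then by definition there is a witnessing sequence of steps $C_0\stackrel{A_1\cup\underline{B}_1}{\rightarrow}\cdots\stackrel{A_n\cup\underline{B}_n}{\rightarrow}C_n=C$; reading off its labels yields a trace $t$ with $last(t)=C$. The two notions thus coincide essentially by definition.

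For item (ii), suppose both $t$ and $t(A\cup\underline{B})$ are traces. By the definition of a trace, the extension $t(A\cup\underline{B})$ being a trace means precisely that $(A\cup\underline{B})$ is enabled at the configuration reached after $t$ --- namely at $last(t)$ --- and that the next configuration $last(t(A\cup\underline{B}))$ equals $(last(t)\setminus B)\cup A$. This is exactly the content of the transition $last(t)\stackrel{A\cup\underline{B}}{\rightarrow}last(t(A\cup\underline{B}))$. For item (iii), suppose $last(t)\stackrel{A\cup\underline{B}}{\rightarrow}last(t')$. By the definition of the transition relation this says $(A\cup\underline{B})$ is enabled at $last(t)$ and $last(t')=(last(t)\setminus B)\cup A$. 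Since $(A\cup\underline{B})$ is enabled at $last(t)$ (which is conflict-free, being a configuration), appending the step to $t$ yields a valid trace $t(A\cup\underline{B})$ with $last(t(A\cup\underline{B}))=(last(t)\setminus B)\cup A=last(t')$, whence $t(A\cup\underline{B})\sim t'$ by the definition of $\sim$.

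There is no genuine mathematical obstacle here: the lemma is a bookkeeping statement reconciling the configuration-oriented and trace-oriented readings of Definition~\ref{def_conf}. The only point requiring a little care is the implicit precondition $CF(last(t))$ that the transition relation carries, since Definition~\ref{def_conf} is stated for $C$ with $CF(C)$; this is always available because every reachable configuration is conflict-free, as already noted immediately after the definition of $\Conf(\mathcal{E})$.
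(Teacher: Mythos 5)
Your proposal is correct and matches the paper's intent: the paper presents Lemma~\ref{lem_0} as an auxiliary fact that ``the last two items of Definition~\ref{def_conf} lead to,'' i.e., it is proved exactly by the kind of definitional unwinding you give, using the determinism of $C'=(C\setminus B)\cup A$ to identify traces with their induced configuration sequences. Your added remark about the conflict-freeness precondition $CF(last(t))$ is the right care point and is covered, as you note, by the paper's observation that every reachable configuration is conflict-free.
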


\begin{example}\label{1.examp}
First, recall the RPES
$\mathcal{E}_0=(E_0$, $<_0$, $\sharp_0$, $l_0$, $F_0$, $\prec_0$, $\rhd_0$, $C^0_0)$ (see Example~\ref{0.examp}) with the components:
$E_0=\{a,b,c,d,e\}$; $<_0 = \{(b, d), (c,e)\}$;  $\sharp_0 = \{(a,b), (b,a), (b,c), (c,b)\}$; $l_0$ is the identical function;
$F_0=\{b,c\}$; $\prec_0 = \{(b, \underline{b}), (c, \underline{c})\}$; $\rhd_0 = \emptyset$; $C^0_0 = \emptyset$.
We shall check
if the sequence $t=(\{b\} \cup \emptyset)(\{d\} \cup \emptyset)(\emptyset \cup \{\underline{b}\})(\{c\} \cup \emptyset)(\{e\} \cup \emptyset)(\emptyset \cup \{\underline{c}\})$ is a trace of $\mathcal{E}_0$,
using Definition~\ref{def_conf}.
First, we need to show that $((A_1 =\{b\}) \cup (\underline{B}_1 =\emptyset))$ is enabled at $C^0_0$.
Item a) is true because $(A_1=\{b\})\cap (C^0_0=\emptyset) = \emptyset$, $B_1=\emptyset\subseteq C^0_0$, and $(\emptyset\cup\{b\})$ is conflict-free.
Item b) is correct, since the event $b$ has no causes, i.e. there is no  $e' \in E_0$ such that $e' <_0 b$.
As $B_1=\emptyset$, items c) and d) are met.
Then, we have $C^0_0 \stackrel{\{b\}\cup \emptyset}{\longrightarrow}C^0_1=\{b\}$.
Second, verify if $((A_2 =\{d\}) \cup (\underline{B}_2 =\emptyset))$ is enabled at $C^0_1$.
We see that $(A_2=\{d\}) \cap (C^0_1=\{b\}) =  \emptyset$, $B_2 = \emptyset \subseteq C^0_1$, and $\{b,d\}$ is conflict-free.
Hence, item a) is correct.
Item b) is met because $d$ has the only cause $b$ belonging to $C^0_1 \setminus B_2$.
Due to $B_2=\emptyset$, items c) and d) are true.
So, we get $C^0_1 \stackrel{\{d\}\cup \emptyset}{\longrightarrow} C^0_2=\{b, d\}$.
Third, make sure that  $((A_3=\emptyset) \cup (\underline{B}_3=\{\underline{b}\}))$ is enabled at $C^0_2$.
Items a) is fulfilled thanks to $A_3 = \emptyset$, $B_3 = \{b\} \subseteq C^0_2$, and $C^0_2$ is conflict-free.
Clearly, item b) is true.
Item c) holds because the only reverse cause for the event $b$ is the event itself, which is in $\{b, d\} = (C^0_2 \setminus (B_3 \setminus \{b\}))$.
As $\rhd_0=\emptyset$, item d) is correct.
Hence, we obtain $C^0_2 \stackrel{\emptyset \cup \{\underline{b}\}}{\longrightarrow} C^0_3=\{d\}$.
Fourth, demonstrate that $((A_4 =\{c\}) \cup (\underline{B}_4 =\emptyset))$ is enabled at $C^0_3$.
We see that $(A_4= \{c\}) \cap (C^0_3 = \{d\}) = \emptyset$, $B_4 = \emptyset \subseteq C^0_3$, and $C^0_3 \cup A_4 = \{c,d\}$ is conflict-free.
This means that item a) is correct.
Item b) is met thanks to the fact that $c$ has no causes.
Because of $B_4=\emptyset$, items c) and d) are met.
Therefore, $C^0_3 \stackrel{\{c\}\cup \emptyset}{\longrightarrow} C^0_4=\{c,d\}$ is true.
Fifth, check that $((A_5 =\{e\}) \cup (\underline{B}_5 =\emptyset))$ is enabled at $C^0_4$.
Since $(A_5=\{e\}) \cap (C^0_4=\{c, d\}) =  \emptyset$, $B_5 = \emptyset \subseteq C^0_4$, $(C^0_4 \cup A_5) = \{c, d, e\}$ is conflict-free, item a) is correct.
As $e$ has the only cause $c$ belonging $C^0_4\setminus B_5$, item b) is met.
Due to $B_5=\emptyset$, items c) and d) are true.
Hence, we get $C^0_4 \stackrel{\{e\}\cup \emptyset}{\longrightarrow} C^0_5=\{c, d, e\}$.
Finally, we examine if $((A_6=\emptyset) \cup (\underline{B}_6=\{\underline{c}\}))$ is enabled at $C^0_5$.
Item a) is fulfilled thanks to $A_6 = \emptyset$, $B_6 = \{c\} \subseteq C^0_5$, and $C^0_5$ is conflict-free.
Obviously, item b) is true.
Item c) holds because the only reverse cause for the event $c$ is the event itself, which is in $\{c, d, e\} = (C^0_5 \setminus (B_6 \setminus \{c\}))$.
Because of $\rhd_0=\emptyset$, item d) is correct.
So, we obtain $C^0_5 \stackrel{\emptyset \cup \{\underline{c}\}}{\longrightarrow} C^0_6=\{d,e\}$.
Thus, $t$ is indeed a trace of $\mathcal{E}_0$.

Reasoning analogously, we get the following configurations of $\mathcal{E}_0$:
$\emptyset$, $\{a\}$, $\{b\}$, $\{c\}$, $\{d\}$, $\{e\}$, $\{b,d\}$, $\{c,e\}$, $\{c,d\}$, $\{b,e\}$, $\{d,e\}$, $\{b,d,e\}$, $\{c,d,e\}$.
Since the event $a$ is independent with each of the events $c$, $d$, $e$, we get the additional configurations:
$\{a,c\}$, $\{a,d\}$, $\{a,e\}$, $\{a,c,e\}$, $\{a,c,d\}$, $\{a,d,e\}$, $\{a,c,d,e\}$.
Since the pair $(a,b)$ ($(b,c)$) is in the conflict relation $\sharp_0$, the events $a$ and $b$ ($b$ and $c$) cannot occur together in any configuration.
Therefore, all the configurations of $\mathcal{E}_0$ are listed above.
Some of the maximal traces of $\mathcal{E}_0$ are:
$(t_1 t_2)^* (\{a\} \cup \emptyset) t_2 t_4 t_2$,\hspace*{0.1cm}
$(t_1 t_2)^* t_4 (t_1 t_2)^* (\{a, c\} \cup \emptyset) t_5$,\hspace*{0.1cm}
$(t_1 t_2)^* t_3 (t_1 t_2)^* t_4 (t_1 t_2)^* (\{a, c\} \cup \emptyset) t_5$,\hspace*{0.1cm}
$(t_1 t_2)^* t_3 (t_1 t_2)^* (\{c\} \cup \emptyset) (\{a, e\} \cup \emptyset) t_5$,\hspace*{0.1cm}
where
$t_1=((\{b\} \cup \emptyset)(\emptyset \cup \{\underline{b}\}))^*$,\hspace*{0.1cm}
$t_2=((\{c\} \cup \emptyset)(\emptyset \cup \{\underline{c}\}))^*$, \hspace*{0.1cm} $t_3=(\{b\} \cup \emptyset) (\{d\} \cup \emptyset) (\emptyset \cup \{\underline{b}\})$,\hspace*{0.1cm}
$t_4=(\{c\} \cup \emptyset) (\{e\} \cup \emptyset) (\emptyset \cup \{\underline{c}\})$,\hspace*{0.1cm}
$t_5=(\emptyset \cup \{\underline{c}\})(\{c\} \cup \emptyset)$.
\medskip

Second, consider the structure $\mathcal{E}_1=(E_1$, $<_1$, $\sharp_1$, $l_1$, $F_1$, $\prec_1$, $\rhd_1$, $C^1_0)$,
where $E_1=\{a,b\}$; $<_1 = \{(a, b)\}$;  $\sharp_1 = \emptyset$; $l_1$ is the identical function;
$F_1=\{a\}$; $\prec_1 = \{(a, \underline{a})\}$; $\rhd_1 = \emptyset$; $C^1_0 = \emptyset$.
It is easy to see that $\mathcal{E}_1$ is an RPES.
As the only pair $(a,b)$ is in the causality relation $<_1$, i.e., the event $a$ has no cause and it causes the event $b$, the event $a$ can occur first and only after that $b$ can happen.
Then, we obtain the forward steps: $\emptyset \stackrel{(\{a\} \cup \emptyset)}{\rightarrow}\{a\}\stackrel{(\{b\} \cup \emptyset)}{\rightarrow}\{a,b\}$.
The intended meaning of $a \prec_1 \underline{a}$ is that the event $a$ can be undone if it has occurred in a configuration.
In this regard, the reverse step $\{a\}\stackrel{(\emptyset \cup \{\underline{a}\})}{\rightarrow}\emptyset$ is possible, thanks to $(b,\underline{a})\not\in\prec_1$ and $\rhd_1 = \emptyset$.
Moreover, the event $a$ can be undone in the configuration $\{a,b\}$ even though the event $b$ is present because $(b,\underline{a})\not\in\rhd_1$.
This means that we can move backwards from $\{a, b\}$ to $\{b\}$ by executing the step $(\emptyset \cup \{\underline{a}\})$.
Therefore, the configurations of $\mathcal{E}_1$ are $\emptyset$, $\{a\}$, $\{b\}$, $\{a,b\}$, and
the traces of $\mathcal{E}_1$ are all prefixes of the trace
$((\{a\}\cup\emptyset)(\emptyset\cup\{\underline{a}\}))^*(\{a\}\cup\emptyset)(\{b\}\cup\emptyset)((\emptyset\cup\{\underline{a}\})(\{a\}\cup\emptyset))^*(\emptyset\cup\{\underline{a}\})$.
\medskip

Third, examine the structure RPES $\mathcal{E}_2=(E_2$, $<_2$, $\sharp_2$, $l_2$, $F_2$, $\prec_2$, $\rhd_2$, $C^2_0)$,
where $E_2=\{a,b\}$; $<_2 = \emptyset$; $\sharp_2 = \emptyset$; $l_2$ is the identical function;
$F_2=\{a\}$; $\prec_2 = \{(a, \underline{a})\}$; $\rhd_2 = \{(b,\underline{a})\}$; $C^2_0 = \emptyset$.
It is not difficult to check that $\mathcal{E}_2$ is an RPES.
As the causality relation $<_2$ and the conflict relation $\sharp_2$ are empty, the events $a$ and $b$ are independent, and, therefore, they can take place in any order.
This leads to the following forward steps:
$\emptyset \stackrel{(\{a\} \cup \emptyset)}\rightarrow \{a\}  \stackrel{(\{b\} \cup \emptyset)}\rightarrow $ $\{a,b\}$ and
$\emptyset \stackrel{(\{b\} \cup \emptyset)}\rightarrow \{b\}  \stackrel{(\{a\} \cup \emptyset)}\rightarrow $ $\{a,b\}$.
Since $b \rhd_2 \underline{a}$, we conclude that $b$ prevents the undoing of $a$, i.e. $a$ cannot be undone if $b$ is present.
So, we can go back from $\{a\}$ to $\emptyset$ by executing the step $(\emptyset \cup \{\underline{a}\})$ and cannot move backwards from $\{a,b\}$.
The configurations of $\mathcal{E}_2$ are $\emptyset$, $\{a\}$, $\{b\}$, $\{a,b\}$,
and the traces of $\mathcal{E}_2$ are all prefixes of the traces
$((\{a\} \cup \emptyset)(\emptyset \cup \{\underline{a}\}))^*(\{a\} \cup \emptyset) (\{b\} \cup \emptyset)$,\hspace*{0.1cm}
$((\{a\} \cup \emptyset)(\emptyset \cup \{\underline{a}\}))^*(\{a,b\} \cup \emptyset)$,\hspace*{0.1cm}
$((\{a\} \cup \emptyset)(\emptyset \cup \{\underline{a}\}))^*(\{b\} \cup \emptyset) (\{a\} \cup \emptyset)$.
\medskip
\newline
It is not difficult to verify the truth of Lemma~\ref{lem_0} for all the RPESs discussed above.
\hfill$\Diamond$
\end{example}
\medskip

RPESs are able to model such a peculiarity of reversible computation as causal-consistent reversibility
which relates reversibility with causality: an event can be undone provided that all of its effects have been undone.
This allows the system to get back to a past state, which could only be reached by forward computation.
This notion of reversibility is natural in reliable concurrent systems since when an error occurs the system tries to go back to a past consistent state.

\begin{definition}
An RPES $\mathcal{E}=(E, <, \sharp,  l, F, \prec, \rhd, C_0)$ is called
\begin{itemize}
\item
{\em cause-respecting} if for any $e, e' \in E$, if $e < e'$ then $e \ll e'$;
\item
{\em causal} if for any $e \in E$ and $u \in F$ it holds: $e \prec \underline{u}$ iff $e = u$, and $e \rhd \underline{u}$ iff $u < e$.
\end{itemize}
\end{definition}

Informally, in the cause-respecting and causal RPES, causes can be only undone if their effects are not present in the current configuration.
Clearly, if the RPES is causal, then it is cause-respecting as well.

\begin{example}\label{2.examp}
First, recall the RPES $\mathcal{E}_0$ (see Examples~\ref{0.examp} and \ref{1.examp}) with the components:
$E_0=\{a,b,c,d,e\}$; $<_0 = \{(b, d), (c,e)\}$;  $\sharp_0 = \{(a,b), (b,a), (b,c), (c,b)\}$; $l_0$ is the identical function;
$F_0=\{b,c\}$; $\prec_0 = \{(b, \underline{b}), (c, \underline{c})\}$; $\rhd_0 = \emptyset$; $C^0_0 = \emptyset$.
We know from Example~\ref{0.examp} that the sustained causation relation $\ll_0$ is empty, because
the causality relation $<_0$ contains the pairs $(b, d)$ and $(c,e)$ and the prevention relation $\rhd_0$ is empty.
Since $\ll_0 \neq <_0$, we have that this RPES is neither cause-respecting nor causal.

Second, consider the RPES $\mathcal{E}_1$ (see Example~\ref{1.examp}) with the components:
$E_1=\{a,b\}$; $<_1 =  \{(a, b)\}$; $\sharp_1 = \emptyset$; $l_1$ is the identical function; $F_1=\{a\}$; $\prec_1 = \{(a, \underline{a})\}$; $\rhd_1 = \emptyset$; $C^1_0 = \emptyset$.
It is easy to see that $\ll_1=\emptyset$, since $<_1 = \{(a, b)\}$ and $(b,\underline{a})\not\in\rhd_1$.
Then, we obtain $<_1\neq\ll_1$.
So, this RPES is neither cause-respecting nor causal.

Third, examine the RPES $\mathcal{E}_2=(E_2$, $<_2$, $\sharp_2$, $l_2$, $F_2$, $\prec_2$, $\rhd_2$, $C^2_0)$ (see Example~\ref{1.examp}) with the components:
$E_2=\{a,b\}$; $<_2 = \emptyset$; $\sharp_2 = \emptyset$; $l_2$ is the identical function; $F_2=\{a\}$; $\prec_2 = \{(a, \underline{a})\}$; $\rhd_2 = \{(b,\underline{a})\}$; $C^2_0 = \emptyset$.
The RPES is cause-respecting,
because the causality relation $<_2$ is empty, and, hence, for the only reversible event $a$ of $\mathcal{E}_3$, the set of its effects is empty, which implies  $<_2 = \ll_2=\emptyset$.
On the other hand, $\mathcal{E}_2$ is not causal, because
there are the events $a$ and $b$ such that $b \rhd_2 \underline{a}$ and $ a \nless_2 b$.

Fourth, treat the RPES $\mathcal{E}_3=(E_3$, $<_3$, $\sharp_3$, $l_3$, $F_3$, $\prec_3$, $\rhd_3$, $C^3_0)$, where
$E_3=\{a,b,c,d\}$; $<_3 = \{(b,d)$, $(c,d)\}$; $\sharp_3 = \{(a,c)$, $(c,a)$, $(a,d)$, $(d,a)\}$; $l_3$ is the identical function;
$F_3=\{b\}$; $\prec_3 = \{(a, \underline{b}), \ (b, \underline{b})\}$; $\rhd_3 = \{(d, \underline{b})\}$ and $C^3_0 = \{b\}$.
Since for the only reversible event $b$, the set of its effects is equal to $\{d\}$ and $d \rhd_3 \underline{b}$ is true, we conclude that the RPES is cause-respecting,
whereas it is not causal because $(a,\underline{b})\in\ \prec_3$ and $a \neq b$.

Finally, consider the RPES $\mathcal{E}_4=(E_4$, $<_4$, $\sharp_4$, $l_4$, $F_4$, $\prec_4$, $\rhd_4$, $C^4_0)$, where
$E_4=\{a,b,c,d\}$; $<_4 = \{(c,d)\}$; $\sharp_4 = \{(a,c)$, $(c,a)$, $(a,d)$, $(d,a)\}$; $l_4$ is the identical function;
$F_4=\{c, b\}$; $\prec_4 = \{ (b, \underline{b})$, $(c, \underline{c})\}$; $\rhd_4 =\{(d, \underline{c})\}$, $C^4_0 = \{b,c\}$.
The RPES is causal and therefore cause-respecting.
This is because $<_4 = \{(c,d)\}$ and $\rhd_4 = \{(d, \underline{c})\}$, and
the reverse cause for the undoing of the only reversible event is the event itself, since we have $F_4=\{b, c\}$ and $ \prec_4 = \{(b, \underline{b}), (c, \underline{c})\}$.
\hfill$\Diamond$
\end{example}

Any cause-respecting RPES with the empty initial configuration can be presented as a PES.
On the other hand, any PES can be converted into a causal and therefore cause-respecting RPES with the empty initial configuration, once we specify which events are to be reversible.
The following facts are slight modifications of Propositions 3.36 and 3.37 from \cite{PU15}.

\begin{proposition}\label{PU15_1}
\ \\
\vspace*{-0.5cm}
\begin{itemize}
\item[(i)]
If $\mathcal{E}=(E, <, \sharp,  l, F, \prec, \rhd, \emptyset)$ is a cause-respecting RPES then $\phi(\mathcal{E})=(E, <, \sharp,  l, \emptyset)$ is a PES.
\item[(ii)]
If $\mathcal{E}=(E,<,\sharp,  l, \emptyset)$ is a PES and $F\subseteq E$ then
$\varphi(\mathcal{E}, F)=(E, <, \sharp,  l, F, \prec, \rhd, \emptyset)$ is a causal RPES,
where $e\prec\underline{e}$ for any $e\in F$, and $e\rhd\underline{e'}$ for any $e \in E$ and $e' \in F$ such that $e'<e$.
Moreover, $\phi(\varphi(\mathcal{E}, F))=\mathcal{E}$.
\end{itemize}
\end{proposition}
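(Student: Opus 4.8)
The plan is to treat the two parts separately, and in each case to check the relevant axioms one by one, isolating the single place where the hypothesis does real work.

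For part~(i), I would begin from the observation that $\phi$ merely forgets the reversibility data $F,\prec,\rhd$ while retaining $E$, $<$, $\sharp$, $l$ and keeping the initial configuration empty. Countability of $E$, irreflexivity and the partial-order property of $<$, and the irreflexivity and symmetry of $\sharp$ are inherited verbatim from Definition~\ref{def_RPES}; the principle of finite causes holds because Definition~\ref{def_RPES} already demands that $\lfloor e\rfloor$ be finite. The only PES axiom that is not immediate is the principle of hereditary conflict, and this is exactly where the cause-respecting hypothesis is used. Assuming $e<e'$ and $e\ \sharp\ e''$, the cause-respecting property turns $e<e'$ into $e\ll e'$, and by symmetry of $\sharp$ one has the chain $e''\ \sharp\ e\ll e'$; heredity of $\sharp$ with respect to $\ll$ then yields $e''\ \sharp\ e'$, hence $e'\ \sharp\ e''$. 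This establishes hereditary conflict and finishes~(i).

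For part~(ii), I would first spell out the two defined relations, $\prec=\{(e,\underline{e})\mid e\in F\}$ and $\rhd=\{(e,\underline{e'})\mid e'\in F,\ e'<e\}$, and then verify the RPES axioms in turn. The reverse-causality constraints are immediate: $a\prec\underline{a}$ holds by construction and $\{e\mid e\prec\underline{a}\}=\{a\}$ is finite and (by irreflexivity of $\sharp$) conflict-free; a pair in $\rhd\cap\prec$ would force $e=e'$ together with $e'<e$, contradicting irreflexivity of $<$, so $\rhd\cap\prec=\emptyset$. One genuinely needs to check that $\lfloor e\rfloor$ is conflict-free, since an RPES requires this whereas a PES demands only finiteness; I would derive it from the PES axioms by supposing $e_1,e_2\in\lfloor e\rfloor$ with $e_1\ \sharp\ e_2$ and applying hereditary conflict twice to reach $e\ \sharp\ e$, a contradiction.

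The crux of~(ii) is the computation of the sustained causation relation: I would show $\ll\ =\ <$, since when $a\notin F$ the defining implication is vacuous, and when $a\in F$ the condition $b\rhd\underline{a}$ is by definition of $\rhd$ equivalent to $a<b$, so $a\ll b\iff a<b$ in both cases. Transitivity of $\ll$ and heredity of $\sharp$ with respect to $\ll$ then reduce to the corresponding PES properties of $<$, and the causal property is read off directly, $e\prec\underline{u}\iff e=u$ and $e\rhd\underline{u}\iff u<e$. Finally, as a causal RPES is cause-respecting, $\phi$ applies to $\varphi(\mathcal{E},F)$ and restores precisely the components $(E,<,\sharp,l,\emptyset)$ left untouched by $\varphi$, giving $\phi(\varphi(\mathcal{E},F))=\mathcal{E}$. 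I expect the main obstacle to be conceptual rather than computational: recognising that the entire argument turns on the interplay between the two notions of conflict heredity — pushing heredity from $\ll$ down to $<$ via cause-respecting in~(i), and checking in~(ii) that the chosen $\rhd$ collapses $\ll$ back onto $<$ so that heredity with respect to $\ll$ is no stronger than the PES axiom — together with the easily overlooked point that conflict-freeness of $\lfloor e\rfloor$ is forced by, not assumed in, the PES definition.
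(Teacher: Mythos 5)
Your proposal is correct and follows essentially the same route as the paper's proof (which, following Propositions~3.36 and 3.37 of \cite{PU15}, proceeds by axiom-by-axiom verification): in~(i) the cause-respecting hypothesis upgrades $<$ to $\ll$ so that heredity of $\sharp$ w.r.t.\ $\ll$ yields the PES conflict-inheritance axiom, and in~(ii) the chosen $\prec$ and $\rhd$ force $\ll\ =\ <$, reducing all sustained-causation requirements to the PES axioms. You also correctly flag the two non-trivial checks that are easy to miss: conflict-freeness of $\lfloor e\rfloor$ must be derived from hereditary conflict plus irreflexivity, and $\rhd\cap\prec=\emptyset$ follows from irreflexivity of $<$.
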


The following lemma states specific features of the configurations of the cause-respecting RPES, which are left-closed w.r.t. causality and forwards reachable.
Thanks to Definitions \ref{def_RPES} and \ref{def_conf}, the truth of item (i) follows from Proposition 3.38(1) \cite{PU15}, and the truth of item (ii) --- from Proposition 3.40(2) \cite{PU15}.
\begin{lemma}\label{PU15}
Given a cause-respecting $\mathcal{E}$ and its configuration $C\in Conf(\mathcal{E})$, it holds:
\begin{itemize}
\item[(i)]
$C$ is left-closed under $<$;
\item[(ii)]
if $C$ is reachable, then $C$ is forwards reachable.
\end{itemize}
\end{lemma}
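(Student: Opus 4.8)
The plan is to argue both claims directly from the one-step transition relation $\stackrel{A\cup\underline{B}}{\longrightarrow}$ of Definition~\ref{def_conf}, lifting to traces by induction. For (i) I would show that being left-closed under $<$ is an invariant of every computation step, so that it propagates from $C_0$ along any trace $C_0\stackrel{A_1\cup\underline{B}_1}{\longrightarrow}\cdots\stackrel{A_n\cup\underline{B}_n}{\longrightarrow}C_n=C$ witnessing $C\in Conf(\mathcal{E})$. For (ii) I would combine (i) with the finiteness and conflict-freeness of configurations to rebuild $C$ by a purely forward trace.

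For (i) the base case is $C_0$, which is left-closed under $<$ by Definition~\ref{def_RPES}. For the inductive step, suppose $C_{i-1}$ is left-closed and $C_{i-1}\stackrel{A\cup\underline{B}}{\longrightarrow}C_i=(C_{i-1}\setminus B)\cup A$; take $e\in C_i$ and $e'<e$, and show $e'\in C_i$. Since $C_i=(C_{i-1}\setminus B)\cup A$, either $e\in A$ or $e\in C_{i-1}\setminus B$. If $e\in A$, enabling condition~(b) gives $e'\in C_{i-1}\setminus B\subseteq C_i$. If instead $e\in C_{i-1}\setminus B$, left-closedness of $C_{i-1}$ gives $e'\in C_{i-1}$, so it only remains to exclude $e'\in B$. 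This is the heart of the matter, and it is exactly where cause-respecting is used: from $e'\in B\subseteq F$ and $e'<e$ we get $e'\ll e$, hence $e\rhd\underline{e'}$; then enabling condition~(d) applied to $e'\in B$ forces $e\notin C_{i-1}\cup A$, contradicting $e\in C_{i-1}\setminus B$. Therefore $e'\notin B$, so $e'\in C_{i-1}\setminus B\subseteq C_i$.

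For (ii), by (i) the reachable configuration $C$ is left-closed under $<$, and by definition it is finite and conflict-free. I would then fix a linear extension $e_1,\ldots,e_k$ of $(C,<)$ (so that $e_i<e_j$ implies $i<j$) and construct the forward trace $C_0\stackrel{\{e_1\}\cup\emptyset}{\longrightarrow}\cdots\stackrel{\{e_k\}\cup\emptyset}{\longrightarrow}C$, adding one event at a time. At stage $j$ one verifies that $\{e_j\}\cup\underline{\emptyset}$ is enabled at $\{e_1,\ldots,e_{j-1}\}$: condition~(a) holds because the $e_j$ are distinct and $C$ is conflict-free; condition~(b) holds because left-closedness places every cause of $e_j$ in $C$, and the linear extension puts each such cause among $e_1,\ldots,e_{j-1}$; conditions~(c) and~(d) are vacuous since $B=\emptyset$. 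Hence $C$ is forwards reachable.

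The main obstacle is the exclusion of $e'\in B$ in~(i): ruling out that a computation step undoes a still-present cause of a surviving event. This is the only point in either argument requiring more than the bare RPES axioms, and it fails without the cause-respecting hypothesis, which is precisely what upgrades $e'<e$ to $e\rhd\underline{e'}$ and thereby clashes with condition~(d). For~(ii) there is a mild caveat worth flagging: the forward reconstruction starts from $C_0$, so it presupposes $C_0\subseteq C$; this is automatic when $C_0=\emptyset$, the setting of primary interest, whereas for a nonempty initial configuration a reversal may reach a configuration not containing $C_0$, so the forward rebuild must begin from $C_0$ and add only $C\setminus C_0$.
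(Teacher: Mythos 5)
Your proposal takes a genuinely different route from the paper: the paper gives no argument of its own, but imports both items as ``slight modifications'' of Propositions 3.38(1) and 3.40(2) of \cite{PU15}, whereas you prove them directly from Definitions~\ref{def_RPES} and~\ref{def_conf}. Your part (i) is correct, and the mechanism you isolate is exactly the right one: the only delicate point is ruling out that a step undoes a still-present cause of a surviving event, and your chain ($e'\in B\subseteq F$ and $e'<e$, cause-respecting upgrades this to $e'\ll e$, hence $e\rhd\underline{e'}$, which clashes with enabling condition (d)) settles it. Note that this induction needs only that $C_0$ is left-closed, which Definition~\ref{def_RPES} guarantees, so (i) holds for an arbitrary initial configuration. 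Your part (ii), rebuilding $C$ forward along a linear extension of $(C,<)$ with conditions (c) and (d) vacuous, is also correct --- but only under the proviso $C_0\subseteq C$, in particular when $C_0=\emptyset$, which is the setting of the cited results of \cite{PU15} (there reachability is always from $\emptyset$).

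The caveat you flag at the end is therefore not ``mild'', and your proposed remedy (start from $C_0$ and add only $C\setminus C_0$) cannot close it: if $C_0\not\subseteq C$ the statement itself is false, because forward steps only ever add events, so every forwards reachable configuration contains $C_0$. The paper's own $\mathcal{E}_3$ from Example~\ref{2.examp} (cause-respecting, with $C^3_0=\{b\}$) gives a counterexample to (ii) as literally stated: $\{b\}\stackrel{\{a\}\cup\emptyset}{\rightarrow}\{a,b\}\stackrel{\emptyset\cup\{\underline{b}\}}{\rightarrow}\{a\}$ is a legal trace --- the reversal is enabled since both reverse causes $a,b$ of $b$ lie in $\{a,b\}$ and the only preventing event $d$ does not --- so $\{a\}$ is a reachable configuration, yet it is not forwards reachable because it does not contain $b$. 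So your instinct was right, but the conclusion should be sharper: item (ii) needs the extra hypothesis $C_0=\emptyset$ (or at least $C_0\subseteq C$), which is implicit in the paper's appeal to \cite{PU15} and is not enforced by Definition~\ref{def_RPES}. With that hypothesis made explicit, your two arguments constitute a complete and self-contained proof, which is more than the paper itself provides.
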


The below example explains the above lemma.

\begin{example}\label{3.examp}
Recall the non-cause-respecting RPES $\mathcal{E}_0$ (with $<_0 = \{(b, d), (c,e)\}$) from Examples~\ref{0.examp}--\ref{2.examp}.
We know that
$\{d\}$, $\{e\}$, $\{b,e\}$, $\{c,d\}$, $\{d,e\}$, $\{b,d,e\}$, $\{c,d,e\}$, $\{a,d\}$, $\{a,e\}$, $\{a,c,d\}$, $\{a,d,e\}$, $\{a,c,e,d\}$ are configurations of $\mathcal{E}_0$.
Clearly, these configurations are not left-closed under $<_0$.
Also, we can reach the configurations only by using a combination of forward and reverse steps,
i.e. the configurations are reachable but not forwards reachable.

Consider the non-cause-respecting RPES $\mathcal{E}_1$ (with $<_1 =  \{(a, b)\}$) from Examples~\ref{1.examp}--\ref{2.examp}.
The configurations of $\mathcal{E}_1$ are $\emptyset$, $\{a\}$, $\{b\}$, $\{a,b\}$.
We see that the configuration $\{b\}$ is not left-closed under $<_1$.
In addition, the configuration $\{b\}$ can only be reached with a combination of forward and reverse steps, but this is not possible when doing only forward steps.

It is easy to check that in the cause-respecting RPES $\mathcal{E}_2$ from Examples~\ref{1.examp}--\ref{2.examp},
all its configurations are left-closed under its causality relation and, moreover, forwards reachable.
\hfill$\Diamond$
\end{example}

\section{Residuals}
The removal operator,
the concept of which is based on deleting already executed configurations (traces) and events that conflict with the events presenting in the configurations (traces),
is necessary for residual semantics.

Introduce the definition of the removal operator for RPESs by using their traces.
\begin{definition}\label{def_rem}
For an RPES $\mathcal{E}=(E, <, \sharp,  l, F, \prec, \rhd, C_0)$ and its trace $t=(A_1\cup\underline{B}_1)$ $\ldots$ $(A_n\cup\underline{B}_n)\in Traces(\mathcal{E})$ ($n\geq0$),
the residual $\mathcal{E}\setminus t$ of $\mathcal{E}$ after $t$ under the removal operator $\setminus$ is defined by induction on $0\leq i\leq n$ as follows:
\begin{itemize}
\item
[$i=0$.]
$\mathcal{E}\setminus (t_0=\epsilon)=\mathcal{E}$.
\item[$i>0$.]
$\mathcal{E}\setminus t_i= (E^{i}$, $<^{i}=<^{i-1}\cap\ (E^{i}\times E^{i})$, $\sharp^{i}=\sharp^{i-1}\cap(E^{i}\times E^{i})$, $l^{i} = l^{i-1}\mid_{E^{i}}$,
$F^{i}$, $\prec^{i}=\prec^{i-1}\cap\ (E^{i}\times \underline{F}^{i})$, $\rhd^{i}=\rhd^{i-1}\cap\ (E^{i}\times \underline{F}^{i})$, $C^{i}_0)$,
with
\begin{itemize}
\item
$E^{i}=E^{i-1}\setminus(\widetilde{A}_i\cup \sharp^{i-1}(\widetilde{A}_i))$, where \\
$\widetilde{A}_i=(A_{i}\setminus F^{i-1})\ \cup\
(\lfloor(A_{i}\setminus F^{i-1})\rfloor\cap{F^{i-1}}=\{\widetilde{a}\in F^{i-1}\mid\exists a\in A_{i}\setminus F^{i-1}\colon\widetilde{a}<^{i-1}a\})$,\\
$\sharp^{i-1}(\widetilde{A}_i)=\{a\in E^{i-1}\mid \exists \widetilde{a}\in \widetilde{A}_i\ \colon\ a\ \sharp^{i-1}\ \widetilde{a}\}$;
\item
$F^{i} = (F^{i-1}\cap E^{i})\setminus \big(\hat{A}_i \cup  \hat{\hat{A}}_i \big)$, where \\
$\hat{A}_i = \{e\in F^{i-1}\mid \exists a\in\sharp^{i-1}(\widetilde{A}_i)\colon a\prec^{i-1}\underline{e}\}$, \\
 $\hat{\hat{A}}_i = \{e\in F^{i-1}\mid \exists a\in \widetilde{A}_i\colon a\rhd^{i-1}\underline{e}\}$;
\item
$C^{i}_0 = ((C^{i-1}_0\setminus B_{i})\cup A_{i}) \cap E^i$.
\end{itemize}
\end{itemize}
$\mathcal{E}\setminus t=\mathcal{E}\setminus t_n$.
\end{definition}

The intuitive interpretation of the above definition is as follows.
In the process of constructing the residual of the RPES after a trace, all the irreversible events occurred in the current computation step, their reversible causes and conflicting events thereof are removed,
yielding a reduction of all the relations, the labelling function and the initial configuration in the residual.
This is due to the fact that all these removed events will never be able to occur in any subsequent step.
In addition, reversible events become irreversible, whenever at least one of their reverse causes and/or at least one of the events preventing their undoing are eliminated
because the reversible events can never be undone afterwards.
At the same time, the other reversible events presented in the current step are retained, since they can be reversed in next steps.

It should be emphasized that for any trace $t$ of the RPES $\varphi(\mathcal{E},\emptyset)$\footnote{See Proposition~\ref{PU15_1}(ii).}, where $\mathcal{E}$ is a PES,
the residual $\varphi(\mathcal{E},\emptyset)\setminus t$ coincides with the residual $\mathcal{E}\ \setminus'\ last(t)$, where $\ \setminus'\ $ is the removal operator defined in \cite{MR98}\footnote{In \cite{MR98},
for the PES $\mathcal{E}=(E, <, \sharp,  l)$ and its configuration $C\in Conf(\mathcal{E})$,
the residual $\mathcal{E}\ \setminus'\  C$
is defined as follows:
$\mathcal{E}\ \setminus'\  C= (E'= E\setminus (C\cup\sharp(C))$, $\leq\cap(E'\times E')$, $\sharp\cap(E'\times E')$,  $l\mid_{E'})$,
where $\sharp(C)$ denotes the events conflicting with the events in $C$.}.

We illustrate the application of the above removal operator with
\begin{example}\label{4.examp}
Consider the RPES $\mathcal{E}_2=(E_2$, $<_2$, $\sharp_2$, $l_2$, $F_2$, $\prec_2$, $\rhd_2$, $C^2_0)$ (see Examples~\ref{1.examp}--\ref{3.examp}) with the components:
$E_2=\{a,b\}$; $<_2 = \emptyset$; $\sharp_2 = \emptyset$; $l_2$ is the identical function;
$F_2=\{a\}$; $\prec_2 = \{(a, \underline{a})\}$; $\rhd_2 = \{(b,\underline{a})\}$; $C^2_0 = \emptyset$.
From Example~\ref{1.examp} we know that the traces of $\mathcal{E}_2$ are
$((\{a\} \cup \emptyset)(\emptyset \cup \{\underline{a}\}))^*$,\hspace*{0.1cm}
$((\{a\} \cup \emptyset)(\emptyset \cup \{\underline{a}\}))^*(\{a\} \cup \emptyset)$,\hspace*{0.1cm}
$((\{a\} \cup \emptyset)(\emptyset \cup \{\underline{a}\}))^*(\{a\} \cup \emptyset) (\{b\} \cup \emptyset)$,\hspace*{0.1cm}
$((\{a\} \cup \emptyset)(\emptyset \cup \{\underline{a}\}))^*(\{a,b\} \cup \emptyset)$,\hspace*{0.1cm}
$((\{a\} \cup \emptyset)(\emptyset \cup \{\underline{a}\}))^*(\{b\} \cup \emptyset)$,\hspace*{0.1cm}
$((\{a\} \cup \emptyset)(\emptyset \cup \{\underline{a}\}))^*(\{b\} \cup \emptyset) (\{a\} \cup \emptyset)$.
\medskip

Applying the removal operator to the RPES $\mathcal{E}_2$ and its traces, we obtain the following structures:
\begin{itemize}
\item[--]
$\tilde{\mathcal{E}}_2=\mathcal{E}_2\setminus (A_1=\{a\} \cup \underline{B}_1=\emptyset)=
(\tilde{E}=E_2$, $\tilde{<}=<_2$, $\tilde{\sharp}=\sharp_2$, $\tilde{l}=l_2$,
$\tilde{F}=F_2$, $\tilde{\prec}=\prec_2$, $\tilde{\rhd}=\rhd_2$, $\tilde{C}_0=\{a\})$,
because $(\widetilde{A}_1\cup\sharp_2(\widetilde{A}_1))=\emptyset$, due to $a\in F_2$, and
$\tilde{C}_0=((C^2_0 = \emptyset) \cup (A_1=\{a\}))\cap (\tilde{E}=\{a,b\}) = \{a\}$;
\item[--]
$\hat{\mathcal{E}}_2=\mathcal{E}_2\setminus(A_1=\{a\} \cup \underline{B}_1=\emptyset)(A_2=\emptyset \cup \underline{B}_2=\{\underline{a}\})=\mathcal{E}_2$,
since $(\widetilde{A}_2\cup\dot{\sharp}(\widetilde{A}_2))=\emptyset$, thanks to $a\in\tilde{F}$, and
$((\tilde{C}_0=\{a\})\setminus B_2=\{a\})\cap\hat{E}_2=\emptyset$;
\item[--]
$\breve{\mathcal{E}}_2=\mathcal{E}_2\setminus (A_1=\{a\} \cup \underline{B}_1=\emptyset)(A_2=\{b\} \cup \underline{B}_2=\emptyset) =
(\breve{E}=\{a\}$, $\breve{<}=\emptyset$, $\breve{\sharp}=\emptyset$; $\breve{l}=l_2|_{\{a\}}$;
$\breve{F}=\emptyset$; $\breve{\prec} = \emptyset$; $\breve{\rhd} = \emptyset$, $\breve{C}_0 = \{a\})$,
because $\widetilde{A}_2=\{b\}$, due to $b\in A_2\setminus \tilde{F}$, $a\not\in\breve{F}$, due to $(b,\underline{a})\in\tilde{\rhd}$, and
$\breve{C}_0 = ((\tilde{C}_0=\{a\}) \cup (A_2=\{b\})) \cap (\breve{E}=\{a\}) = \{a\}$;
\item[--]
$\check{\mathcal{E}}_2=\mathcal{E}_2\setminus(A_1=\{a,b\} \cup \underline{B}_1=\emptyset) = \breve{\mathcal{E}}_2$,
since $\widetilde{A}_1=\{b\}$, due to $b\in A_1\setminus F_2$, $a\not\in\check{F}$, due to $(b,\underline{a})\in\rhd_2$, and
$\check{C}_0 = ((C^0_2=\emptyset) \cup (A_1=\{a,b\})) \cap (\check{E}=\{a\}) = \{a\}=\breve{C}_0$;
\item[--]
$\dot{\mathcal{E}}_2=\mathcal{E}_2\setminus(A_1=\{b\} \cup \underline{B}_1=\emptyset)=
(\dot{E}=\{a\}$, $\dot{<}=\emptyset$, $\dot{\sharp}=\emptyset$, $\dot{l}=l_2|_{\{a\}}$,
$\dot{F}=\emptyset$, $\dot{\prec}=\emptyset$, $\dot{\rhd}=\emptyset$, $\dot{C}_0=\emptyset)$,
because $\widetilde{A}_1=\{b\}$, due to $b\in A_1\setminus F_2$, $a\not\in\dot{F}$, due to $(b,\underline{a})\in\rhd_2$, and
$\dot{C}_0=((C^0_2=\emptyset)\cup (A_1=\{b\})) \cap (\dot{E}=\{a\}) = \emptyset$;
\item[--]
$\ddot{\mathcal{E}}_2=\mathcal{E}_2\setminus(A_1=\{b\} \cup \underline{B}_1=\emptyset)(A_2=\{a\}\cup \underline{B}_2=\emptyset) =
(\ddot{E}=\emptyset$, $\ddot{<}=\emptyset$, $\ddot{\sharp}=\emptyset$, $\ddot{l}=\emptyset$,
$\ddot{F}=\emptyset$, $\ddot{\prec}=\emptyset$, $\ddot{\rhd}=\emptyset$, $\ddot{C}_0=\emptyset)$,
since $\widetilde{A}_2=\{a\}$, due to $a\in A_2\setminus\dot{F}$, and
$\ddot{C}_0=((\dot{C}_0=\emptyset)\cup (A_2=\{a\})) \cap (\dot{E}=\emptyset) = \emptyset$.
\end{itemize}

Notice that the removal operator produces the same residuals after the different traces.
For example, it is easy to see that:

$\mathcal{E}_2\setminus(\{a\} \cup \emptyset)(\emptyset \cup \{\underline{a}\})=\mathcal{E}_2\setminus((\{a\} \cup \emptyset)(\emptyset \cup \{\underline{a}\}))^*$,

$\mathcal{E}_2\setminus(\{a\} \cup \emptyset)=\mathcal{E}_2\setminus((\{a\} \cup \emptyset)(\emptyset \cup \{\underline{a}\}))^*(\{a\} \cup \emptyset)$,

$\mathcal{E}_2\setminus(\{a\} \cup \emptyset)(\{b\} \cup \emptyset) =
\mathcal{E}_2\setminus((\{a\} \cup \emptyset)(\emptyset \cup \{\underline{a}\}))^*(\{a\} \cup \emptyset) (\{b\} \cup \emptyset)$,

$\mathcal{E}_2\setminus(\{a,b\}\cup \emptyset) = \mathcal{E}_2\setminus((\{a\} \cup \emptyset)(\emptyset \cup \{\underline{a}\}))^*(\{a,b\} \cup \emptyset)$,

$\mathcal{E}_2\setminus(\{b\} \cup  \emptyset) = \mathcal{E}_2\setminus((\{a\} \cup \emptyset)(\emptyset \cup \{\underline{a}\}))^*(\{b\} \cup \emptyset)$,

$\mathcal{E}_2\setminus(\{b\} \cup  \emptyset)(\{a\}\cup \emptyset) =
\mathcal{E}_2\setminus((\{a\} \cup \emptyset)(\emptyset \cup \{\underline{a}\}))^*(\{b\} \cup \emptyset) (\{a\} \cup \emptyset)$.\hfill$\Diamond$
\end{example}
\medskip

Below are some technical facts specific to the removal operator for RPESs.

\begin{lemma}\label{lem_rem_0}
Given a cause-respecting RPES $\mathcal{E}=(E, <, \sharp,l, F, \prec,  \rhd, C_0)$,
a trace $t = (A_1 \cup \underline{B_1}) \ldots (A_n \cup \underline{B_n})$
($C_0 \stackrel{A_1 \cup \underline{B_1}}{\rightarrow} C_1$ $\ldots$ $C_{n-1} \stackrel{A_{n} \cup \underline{B_{n}}}{\rightarrow} C_n$) $(n\geq0)$ of $\mathcal{E}$,
and $\mathcal{E}\setminus t=(E^n$, $<^n$, $\sharp^n$, $l^n$, $F^n$, $\prec^n$,  $\rhd^n$, $C^n_0)$, it holds:
\begin{itemize}
\item[(i)]
$E^j\subseteq E^{i}$, $F^j\subseteq F^{i}$, $l^j\subseteq l^{i}$, $\nabla^{j}\subseteq \nabla^{i}$ ($\nabla\in\{<,\sharp,\prec,\rhd\}$), for any $0\leq i\leq j\leq n$;
\item[(ii)]
$\mathcal{E}\setminus t_i$ is a cause-respecting RPES, for any $0\leq i\leq n$;
\item[(iii)]
$B_i\subseteq F^{i-1}$, for any $1\leq i\leq n$;
\item[(iv)]
$A_i\subseteq E^{i-1}$, for any $1\leq i\leq n$;
\item[(v)]
$\widetilde{A}_{i}\subseteq C_n$, for any $1\leq i\leq n$;
\item[(vi)]
$C_0^n=C_n\cap E^n$.
\end{itemize}
\end{lemma}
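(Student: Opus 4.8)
The plan is to prove the six items by induction on the index $i$ of the trace prefix $t_i$, since (iii)--(v) are genuinely interdependent. Item (i) is immediate from Definition~\ref{def_rem}: each one-step passage from $\mathcal{E}\setminus t_{i-1}$ to $\mathcal{E}\setminus t_i$ deletes events from $E^{i-1}$, shrinks $F$ (as $F^i\subseteq F^{i-1}\cap E^i$), and obtains every relation and the labeling (viewed as a set of pairs) by restriction to the smaller sets; the single-step inclusions follow at once and the general case $i\leq j$ by transitivity. For (vi) I would induct separately, using the configuration recurrence $C_i=(C_{i-1}\setminus B_i)\cup A_i$ and the definition $C^i_0=((C^{i-1}_0\setminus B_i)\cup A_i)\cap E^i$: substituting the hypothesis $C^{i-1}_0=C_{i-1}\cap E^{i-1}$ and pushing $\cap E^i$ through the union and difference (valid since $E^i\subseteq E^{i-1}$ by (i)) yields $C^i_0=C_i\cap E^i$ directly.

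The substance of the lemma lies in (iii)--(v). I would strengthen (v) to the persistence statement that every event of $\widetilde{A}_i$ belongs to $C_m$ for all $m\geq i$, and prove this strengthened form together with (iii) and (iv) by induction on $i$, the crucial hypothesis being persistence of the earlier sets $\widetilde{A}_k$ ($k<i$). The common engine is that an event which is irreversible in the residual cannot be undone in the original trace: if $e$ has lost its residual-reversibility, it did so because a reverse cause of $e$ conflicts with some persistent $h\in\widetilde{A}_k$ (so condition c) fails, that reverse cause being excluded from the conflict-free $C_{m-1}$) or because a persistent preventer $g\in\widetilde{A}_k$ blocks it (so condition d) fails); either way $e\notin B_m$.

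Using this engine, the strengthened (v) follows: a genuinely irreversible new event is never in any $B_m$ (since $B_m\subseteq F$); a new event that merely lost residual-reversibility is blocked as above; and a reversible cause $\widetilde{a}$ of an irreversible event $a$, kept in $C_i$ by condition b), can never be undone afterwards because $a$ persists and, by the cause-respecting property of $\mathcal{E}\setminus t_{i-1}$, satisfies $a\rhd\underline{\widetilde{a}}$ and so permanently prevents its reversal via condition d). Item (iii) is the contrapositive reading at the current step --- an event of $B_i$, being reversed, can have neither lost its residual-reversibility nor been deleted (the latter ruled out by the same conflict/prevention/reverse-cause analysis, invoking persistence of the $\widetilde{A}_k$, $k<i$), whence $B_i\subseteq F^{i-1}$. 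Item (iv) is analogous: a deleted event of $A_i$ would persist in $C_n$ and already lie in $C_{i-1}$, contradicting $A_i\cap C_{i-1}=\emptyset$ or the conflict-freeness of $C_{i-1}\cup A_i$, so $A_i\subseteq E^{i-1}$.

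Finally, for (ii) I would check the RPES axioms for $\mathcal{E}\setminus t_i$: irreflexivity and symmetry of $\sharp^i$, the partial-order property of $<^i$, finiteness of $\lfloor e\rfloor$ and of the reverse-cause sets, and $\rhd^i\cap\prec^i=\emptyset$ are inherited from level $i-1$ because all relations are restrictions; $a\prec^i\underline{a}$ for $a\in F^i$, heredity of $\sharp^i$ w.r.t.\ $\ll^i$, and the cause-respecting implication $e<^i e'\Rightarrow e\ll^i e'$ are obtained by transferring the level-$(i-1)$ facts and checking that the relevant pairs survive the restriction (for instance, if $e\in F^i$ then $e\in F^{i-1}$, so $e'\rhd^{i-1}\underline{e}$, and $e',e\in E^i$ give $e'\rhd^i\underline{e}$); and $C^i_0$ is finite, conflict-free (by (vi), $C^i_0=C_i\cap E^i\subseteq C_i$), and left-closed under $<^i$. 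The main obstacle is organizing the joint induction for (iii)--(v): proving that an event reversed at step $i$ was neither deleted from the event set nor stripped of its reversibility while forming the residual after $t_{i-1}$ forces one to invoke, simultaneously, all four enabling conditions a)--d), the cause-respecting hypothesis, and the persistence of the previously removed events, so the strengthened persistence form of (v) must be threaded through the induction to keep the argument non-circular.
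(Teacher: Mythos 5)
Your proposal is correct: the joint induction on the trace prefix, with (v) strengthened to persistence of the removed sets $\widetilde{A}_k$ in every later configuration, the conflict/prevention ``engine'' showing that loss of residual reversibility is permanent (condition c) blocked by conflict-freeness of configurations, condition d) blocked by a persistent preventer), and the appeal to the cause-respecting property to supply the preventer $a \rhd \underline{\widetilde{a}}$ for removed reversible causes, is exactly the argument this lemma requires, and items (i), (ii), (vi) reduce to routine restriction and set-algebra checks as you describe. The paper itself defers this proof to its external appendix (www.iis.nsk.su/virb/proofs-AFL-2023), so no in-source comparison is possible, but your route is the natural one and I find no gap in it.
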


The following two statements demonstrate compositional properties of the residual operator for cause-respecting RPESs.
\begin{proposition}\label{prop1i}
Given a cause-respecting RPES $\mathcal{E}$ with a trace $t\in Traces(\mathcal{E})$ and
its residual $\mathcal{E}' = \mathcal{E}\setminus t$ with a trace $t'\in Traces(\mathcal{E}')$,
it holds that $tt' \in Traces(\mathcal{E})$, and, moreover, $\mathcal{E}\setminus tt' = \mathcal{E}' \setminus t'$.
\end{proposition}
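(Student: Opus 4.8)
The plan is to induct on the length $m$ of the trace $t'=(A'_1\cup\underline{B'_1})\ldots(A'_m\cup\underline{B'_m})\in Traces(\mathcal{E}')$, where $\mathcal{E}'=\mathcal{E}\setminus t$ is itself a cause-respecting RPES by Lemma~\ref{lem_rem_0}(ii), so that Definitions~\ref{def_conf} and~\ref{def_rem} and Lemma~\ref{lem_rem_0} all apply to $\mathcal{E}'$ as well. The base case $m=0$ is immediate, since $tt'=t\in Traces(\mathcal{E})$ and $\mathcal{E}\setminus t=\mathcal{E}'=\mathcal{E}'\setminus\epsilon$. For the step I would write $t'=t''(A'\cup\underline{B'})$ and assume, as a \emph{strengthened} induction hypothesis, not only that $tt''\in Traces(\mathcal{E})$ and $\mathcal{E}\setminus tt''=\mathcal{E}'\setminus t''$, but also the configuration identity $C:=last(tt'')=D''\cup R$, where $D''$ is the configuration reached by $t''$ in $\mathcal{E}'$ and $R:=last(t)\setminus E'$ is the set of events present at the end of $t$ yet removed in forming $\mathcal{E}'$. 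The base of this identity is exactly Lemma~\ref{lem_rem_0}(vi), which gives $C'_0=last(t)\cap E'$, hence $last(t)=C'_0\cup R$ disjointly.

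A preliminary observation is that the events of $R$ are ``frozen'': $R\subseteq\bigcup_i\widetilde{A}_i$, because any event of $\bigcup_i\sharp^{i-1}(\widetilde{A}_i)$ conflicts with some event of $last(t)$ and so, by conflict-freeness of configurations, cannot itself belong to $last(t)$. Since $A'$ and $B'$ are typed as subsets of $E'$ and of $F'\subseteq E'$, they are disjoint from $R$; hence once enabledness is established, executing $(A'\cup\underline{B'})$ turns $C=D''\cup R$ into $(C\setminus B')\cup A'=((D''\setminus B')\cup A')\cup R$, which propagates the configuration identity to the next round.

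The main work, and the principal obstacle, is \textbf{part (a)}: showing that $(A'\cup\underline{B'})$, enabled at $D''$ in $\mathcal{E}'$ under the restricted relations $<',\sharp',\prec',\rhd'$, is also enabled at $C=D''\cup R$ in $\mathcal{E}$ under the full relations; given this, the definition of a trace (Definition~\ref{def_conf}) yields $tt''(A'\cup\underline{B'})\in Traces(\mathcal{E})$ at once. I would check conditions a)--d) of Definition~\ref{def_conf} one at a time, the issue in each case being the ``new'' constraints from a pair relating a surviving event of $E'$ to some $e'\in E\setminus E'$. The uniform strategy is that such an $e'$ was removed, so $e'\in\bigcup_i(\widetilde{A}_i\cup\sharp^{i-1}(\widetilde{A}_i))$, and one shows it must actually lie in some $\widetilde{A}_i$ (whence $e'\in last(t)$ by Lemma~\ref{lem_rem_0}(v), so $e'\in R\subseteq C$), the alternative $e'\in\sharp^{i-1}(\widetilde{A}_i)$ contradicting survival of the $E'$- or $F'$-event involved. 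Concretely: for b), a cause $e'<e$ of a surviving $e\in A'$ lying in $\sharp^{i-1}(\widetilde{A}_i)$ would, by cause-respecting ($e'\ll e$) and heredity of $\sharp$ w.r.t.\ $\ll$, force $e\in\sharp^{i-1}(\widetilde{A}_i)$, contradicting $e\in E'$; for c), a reverse cause $e'\prec\underline{e}$ of a surviving reversible $e\in B'$ lying in $\sharp^{i-1}(\widetilde{A}_i)$ would put $e\in\hat{A}_i$, contradicting $e\in F'$; for d), a preventer $e'\rhd\underline{e}$ with $e\in B'$ lying in $R\subseteq\bigcup_i\widetilde{A}_i$ would put $e\in\hat{\hat{A}}_i$, again contradicting $e\in F'$, so no such $e'$ survives into $C\cup A'$. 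Condition a) is easiest: $A'\cap R=\emptyset$ and $B'\subseteq D''\subseteq C$ are immediate, and conflict-freeness of $C\cup A'$ follows from that of $D''\cup A'$, of $R\subseteq last(t)$, and the observation that a conflict between $r\in R$ and $x\in D''\cup A'\subseteq E'$ would already have deleted $x$.

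Finally, \textbf{part (b)} is short. Inspection of Definition~\ref{def_rem} shows that a single removal step is a function of the current residual and the applied step alone: each of $\widetilde{A}_i$, $\sharp^{i-1}(\widetilde{A}_i)$, $\hat{A}_i$, $\hat{\hat{A}}_i$, the new sets $E^i,F^i$, the restricted relations, and $C^i_0$ is computed purely from $\mathcal{E}\setminus t_{i-1}$ and $(A_i\cup\underline{B_i})$. Peeling off the last step via the inductive clause of Definition~\ref{def_rem} expresses $\mathcal{E}\setminus tt''(A'\cup\underline{B'})$ as that function applied to $\mathcal{E}\setminus tt''$ and $(A'\cup\underline{B'})$, and $\mathcal{E}'\setminus t''(A'\cup\underline{B'})$ as the same function applied to $\mathcal{E}'\setminus t''$; since $\mathcal{E}\setminus tt''=\mathcal{E}'\setminus t''$ by hypothesis, the two coincide. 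I expect essentially all the difficulty to be concentrated in the four enabledness transfers of part (a), and in particular in the use of cause-respecting (through $\ll$ and hereditary conflict) for condition b), which is precisely where a non-cause-respecting RPES would break the argument.
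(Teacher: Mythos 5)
Your proof is correct: the induction on $|t'|$ with the strengthened invariant $last(tt'')=D''\cup R$ (grounded in Lemma~\ref{lem_rem_0}(vi)), the ``frozen events'' observation $R\subseteq\bigcup_i\widetilde{A}_i$ (via Lemma~\ref{lem_rem_0}(v) and conflict-freeness of $last(t)$), the four enabledness transfers --- where cause-respecting together with heredity of $\sharp$ w.r.t.\ $\ll$ rules out removed causes in $\sharp^{i-1}(\widetilde{A}_i)$ for condition b), and the sets $\hat{A}_i$, $\hat{\hat{A}}_i$ rule out the problematic cases for c) and d) --- and the one-step functionality of the removal operator for $\mathcal{E}\setminus tt'=\mathcal{E}'\setminus t'$ are all sound. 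The paper itself defers this proposition's proof to an external website, so a line-by-line comparison is not possible, but your argument uses exactly the auxiliary machinery the paper prepares for this purpose (Lemma~\ref{lem_rem_0}(i)--(vi) and Definitions~\ref{def_conf}, \ref{def_rem}), and is in that sense the intended route.
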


So, it turned out that the concatenation of any trace $t$ of the cause-respecting RPES $\mathcal{E}$ and any trace $t'$ of the residual $\mathcal{E}\setminus t$ is a trace of $\mathcal{E}$,
and, moreover, the residuals $\mathcal{E}\setminus tt'$ and $\mathcal{E}\setminus t\setminus t'$ coincide.

\begin{example}\label{5.examp}
First, consider the non-cause-respecting $\mathcal{E}_0=(E_0$, $<_0$, $\sharp_0$, $l_0$, $F_0$, $\prec_0$, $\rhd_0$, $C^0_0)$ from Examples~\ref{0.examp}--\ref{3.examp},
where
$E_0=\{a,b,c,d,e\}$; $<_0 = \{(b, d), (c,e)\}$;  $\sharp_0 = \{(a,b), (b,a), (b,c), (c,b)\}$; $l_0$ is the identical function;
$F_0=\{b,c\}$; $\prec_0 = \{(b, \underline{b}), (c, \underline{c})\}$; $\rhd_0 = \emptyset$; $C^0_0 = \emptyset$.
As was demonstrated in Example~\ref{1.examp},
the sequences $(\{b\} \cup \emptyset)$, $(\{b\} \cup \emptyset)(\{d\} \cup \emptyset)$ are traces of $\mathcal{E}_0$.
Construct the following residuals of $\mathcal{E}_0$:
\begin{itemize}
\item[--]
$\dot{\mathcal{E}}_0=\mathcal{E}_0\setminus (A_1=\{b\} \cup \underline{B}_1=\emptyset)=
(\dot{E}_0 = E_0$, $\dot{<}_0=<_0$, $\dot{\sharp}_0=\sharp_0$, $\dot{l}_0=l_0$,
$\dot{F}_0=F_0$, $\dot{\prec}_0=\prec_0$, $\dot{\rhd}_0=\rhd_0$, $\dot{C}_0=\{b\})$,
because $(\widetilde{A}_1\cup\sharp_0(\widetilde{A}_1))=\emptyset$, due to $b\in F_0$,
and, moreover, $\dot{C}_0=((C^0_0 = \emptyset) \cup (A_1=\{b\})) \cap (\dot{E}_0=\{a,b,c,d,e\}) = \{b\}$;

\item[--]
$\ddot{\mathcal{E}}_0=\mathcal{E}_0\setminus (A_1=\{b\} \cup \underline{B}_1=\emptyset)(A_2=\{d\} \cup \underline{B}_2=\emptyset) =
\ddot{E}_0 = \{e\}$, $\ddot{<}_0=\emptyset$, $\ddot{\sharp}_0=\emptyset$, $\ddot{l}_0=\dot{l}_0 |_{\{e\}}$, $\ddot{F}_0=\emptyset$, $\ddot{\prec}_0=\emptyset$, $\ddot{\rhd}_0=\emptyset$, $\ddot{C}_0=\emptyset)$,
because $\widetilde{A}_2 = \{b,d\}$ thanks to $d\in A_2\setminus \dot{F}_0$, $(b,d) \in \dot{<}_0$ and $b\in \dot{F}_0$, and $\dot{\sharp}_0(\widetilde{A}_2)=\{a,c\}$, due to $(a,b), (b,c) \in \dot{\sharp}_0$,
and, moreover, $\dot{C}_0=((\dot{C}_0 = \{b\}) \cup (A_2=\{d\})) \cap (\dot{E}_0=\{e\}) = \emptyset$.
\end{itemize}

It is easy to see that $(\{e\} \cup \emptyset)$ is a trace of $\ddot{\mathcal{E}}_0$,
whereas the sequence $(\{b\} \cup \emptyset)(\{d\} \cup \emptyset)(\{e\} \cup\emptyset)$ is not a trace of $\mathcal{E}_0$.

Using Examples~\ref{1.examp}--\ref{4.examp}, it is not difficult to make sure that Proposition~\ref{prop1i} holds for the cause-respecting RPES $\mathcal{E}_2$.
\hfill$\Diamond$
\end{example}

It is stated below that any suffix $t'$ of any trace $tt'$ of the cause-respecting RPES $\mathcal{E}$ is a trace of the residual $\mathcal{E}\setminus t$.

\begin{proposition}\label{prop1ii}
Given a cause-respecting RPES $\mathcal{E}$ with traces $t',t't''\in Traces(\mathcal{E})$, $t'' \in Traces(\mathcal{E}\setminus t')$ holds.
\end{proposition}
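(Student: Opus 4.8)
The plan is to argue by induction on the length of $t''$, peeling off its last computation step and showing that this step still fires in the residual $\mathcal{E}\setminus t'$. If $t''=\epsilon$ the claim is immediate, as the empty trace belongs to $Traces(\mathcal{E}\setminus t')$. For the inductive step write $t''=s\,(A\cup\underline{B})$. Since $t't''\in Traces(\mathcal{E})$ and every prefix of a trace is a trace (Definition~\ref{def_conf}), we get $t's\in Traces(\mathcal{E})$; together with $t'\in Traces(\mathcal{E})$ the induction hypothesis yields $s\in Traces(\mathcal{E}\setminus t')$. It then remains to show that $(A\cup\underline{B})$ is enabled in $\mathcal{E}\setminus t'$ at $D:=last_{\mathcal{E}\setminus t'}(s)$, after which $t''=s\,(A\cup\underline{B})\in Traces(\mathcal{E}\setminus t')$ by Definition~\ref{def_conf}.

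First I would pin down the state $D$. Writing $E'$, $F'$, $<'$, $\sharp'$, $\prec'$, $\rhd'$ for the components of $\mathcal{E}\setminus t'$ and $C:=last_{\mathcal{E}}(t's)$, I claim $D=C\cap E'$. This follows by a short inner induction along $s$: the configuration update $C_i=(C_{i-1}\setminus B_i)\cup A_i$ is the same in both structures; each forward set occurring in $s$ satisfies $A_i\subseteq E'$ by Lemma~\ref{lem_rem_0}(iv),(i); and intersection with $E'$ commutes with the update since $(X\setminus B_i)\cap E'=(X\cap E')\setminus B_i$ and $A_i\cap E'=A_i$, with base case $last_{\mathcal{E}\setminus t'}(\epsilon)=C_0'=last(t')\cap E'$ by Lemma~\ref{lem_rem_0}(vi). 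Now by Lemma~\ref{lem_0}(ii) the step $(A\cup\underline{B})$ is enabled at $C$ in $\mathcal{E}$ (because $t's(A\cup\underline{B})\in Traces(\mathcal{E})$), and by Definition~\ref{def_rem} every relation of $\mathcal{E}\setminus t'$ is the restriction of the corresponding relation of $\mathcal{E}$ to $E'\times E'$, resp. $E'\times\underline{F'}$, so $\nabla'\subseteq\nabla$ for $\nabla\in\{<,\sharp,\prec,\rhd\}$. Combined with $D=C\cap E'$, the causality condition (b), the reverse-causality condition (c), the prevention condition (d), the conflict-freeness part of (a), and $A\cap D=\emptyset$ all transfer at once from the enabledness at $C$ (a relation in $\mathcal{E}\setminus t'$ involves fewer pairs, and $D\subseteq C$).

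The real work, and what I expect to be the main obstacle, is the \emph{support} part of condition (a): that $A\subseteq E'$ and $B\subseteq E'$ (whence $B\subseteq C\cap E'=D$ using $B\subseteq C$), and, crucially, that $B$ stays reversible, $B\subseteq F'$. Here I would invoke Lemma~\ref{lem_rem_0}(v): every set $\widetilde{A}_i$ of events forward-removed while building $\mathcal{E}\setminus t'$ is contained in $last(t')$, and each of its members is either an event executed irreversibly in $t'$ or a reversible cause of such an event. Since $\mathcal{E}$ is cause-respecting, $C=last(t's)$ is left-closed under $<$ (Lemma~\ref{PU15}(i)), so every forward-removed event persists into $C$; hence a conflict-removed event conflicts with some member of $C$. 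As $A$ is fresh and conflict-free with $C$ (enabledness at $C$ gives $A\cap C=\emptyset$ and $CF(C\cup A)$), no event of $A$ can be forward-removed or conflict-removed, so $A\subseteq E'$, and the same persistence rules out the events of $B$ being deleted, giving $B\subseteq E'$.

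The delicate remaining point is $B\subseteq F'$: an event of $B$ that is reversible in $\mathcal{E}$ must still be reversible in $\mathcal{E}\setminus t'$, i.e.\ none of its reverse-causality or prevention witnesses has been discarded. I would settle this by the same confinement-plus-left-closure argument applied to the sets $\hat{A}_i$ and $\hat{\hat{A}}_i$ of Definition~\ref{def_rem} together with Lemma~\ref{lem_rem_0}(iii), showing that the witnesses required to undo the events of $B$ lie outside the removed material. Controlling exactly which reversible events survive into $F'$ — equivalently, checking $F'$ is large enough to hold $B$ — is where the cause-respecting hypothesis is essential, since it is what guarantees that removed events sit below the current configuration and are closed downwards. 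Once $A\subseteq E'$, $B\subseteq F'$, and $B\subseteq D$ are in hand, enabledness at $D$ is complete and the induction closes.
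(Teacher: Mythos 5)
Your proof skeleton is sound and is essentially the natural one given the paper's toolkit (the paper itself defers the proposition proofs to an external note, so I am judging against the lemmas it states): induction on the length of $t''$, the prefix property, the identity $last(s)$-in-$\mathcal{E}\setminus t'$ $=last(t's)\cap E'$ (your inner induction via Lemma~\ref{lem_rem_0}(vi),(iv),(i) is correct, including the two set identities you use), and the observation that conditions (a)--(d) of Definition~\ref{def_conf} transfer to the residual because each relation of $\mathcal{E}\setminus t'$ is the restriction of the corresponding relation of $\mathcal{E}$ and $D\subseteq C$. Your conflict/persistence argument for $A\subseteq E'$ also goes through, modulo one imprecision noted below.

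The genuine gap is exactly the point you flag and then leave as a plan: $B\subseteq F'$. As described, ``confinement-plus-left-closure applied to $\hat{A}_i$ and $\hat{\hat{A}}_i$'' does not close it. To keep $e\in B$ reversible at every stage $i\leq |t'|$ you must rule out three things: (1) $e\in\widetilde{A}_i$, (2) some reverse cause of $e$ lies in $\sharp^{i-1}(\widetilde{A}_i)$, (3) some preventer of $e$ lies in $\widetilde{A}_i$. Item (2) needs condition (c) of enabledness at $C$ (reverse causes of $e$ lie in $C$, hence cannot be conflict-removed, since conflict-removed events conflict with members of $C$); item (3) needs condition (d) (preventers of $e$ lie outside $C\cup A$, hence cannot be forward-removed, since forward-removed events persist into $C$); and item (1) is where cause-respecting enters in a form you never invoke: if $e$ were a reversible cause of an irreversibly executed $a\in\widetilde{A}_i\subseteq C$, then $e<a$ gives $e\ll a$, so $a\rhd\underline{e}$, contradicting condition (d). Omitting (1) is not cosmetic --- it is precisely the failure mode of Example~\ref{6.examp}, where $a$ is deleted as a reversible cause of the irreversible $b$. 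There is also a one-line route you missed: Definition~\ref{def_rem} builds residuals prefix by prefix, so the removal stages of $\mathcal{E}\setminus(t't'')$ along the first $|t'|$ steps coincide with those of $\mathcal{E}\setminus t'$; applying Lemma~\ref{lem_rem_0}(iii),(iv) together with monotonicity (i) to the full trace $t't''\in Traces(\mathcal{E})$ gives $A=A_n\subseteq E^{n-1}\subseteq E'$ and $B=B_n\subseteq F^{n-1}\subseteq F'$ immediately, subsuming both your $A\subseteq E'$ argument and the delicate point. Finally, your inference ``$C$ is left-closed, so every forward-removed event persists into $C$'' is a non sequitur as written: left-closure of $C$ by itself does not prevent a forward-removed event from being undone during $s$; the correct justification is Lemma~\ref{lem_rem_0}(v) applied to the trace $t's$ (not merely to $t'$), again using prefix-coincidence of the stages.
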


\begin{example}\label{6.examp}
Examine the non-cause-respecting RPES $\mathcal{E}_1$ from Examples~\ref{1.examp}--\ref{3.examp},
with the components:
$E_1=\{a,b\}$; $<_1 =  \{(a, b)\}$; $\sharp_1 = \emptyset$; $l_1$ is the identical function;
$F_1=\{a\}$; $\prec_1 = \{(a, \underline{a})\}$; $\rhd_1 = \emptyset$; $C^1_0 = \emptyset$.
We know that $t'=(\{a\} \cup \emptyset) (\{b\} \cup \emptyset)$ and
$t=(\{a\} \cup \emptyset)(\{b\} \cup \emptyset) (\emptyset \cup \{\underline{a}\})(\{a\} \cup \emptyset)$ are traces of $\mathcal{E}_1$.
Let $t''=(\emptyset \cup \{\underline{a}\})(\{a\} \cup \emptyset)$.
Using Definition~\ref{def_rem}, we obtain the RPES
$\mathcal{E}_1\setminus t' = (E'_1=\emptyset$, $<'_1=\emptyset$, $\sharp'_1=\emptyset$, $l'_1=\emptyset$,
$F'_1=\emptyset$, $\prec'_1 = \emptyset$, $\rhd'_1 = \emptyset$, $C'^1_0 = \emptyset)$.
It is clear that $Traces(\mathcal{E}_1\setminus t') = \emptyset$.
Therefore, we get $t''\not\in Traces(\mathcal{E}_1\setminus t')$.

Using Examples~\ref{1.examp}--\ref{4.examp},
it is not difficult to check that Proposition~\ref{prop1ii} holds for the cause-respecting RPES $\mathcal{E}_2$.
\hfill$\Diamond$
\end{example}

\section{Transition System Semantics for Cause-Respecting RPESs}
\label{assoc.sct}

In this section,
we first give some basic definitions concerning labeled transition systems.
Then, we define the mappings $\TC(\mathcal{E})$ and $\TE(\mathcal{E})$,
which associate two distinct kinds of transition systems
-- one whose states are configurations and one whose states are residuals --
with the RPES $\mathcal{E}$ labeled over the set $L$ of actions.

A transition system $T = (S,\rightarrow,i)$ labeled over a set $\mathcal{L}$ of labels consists of a set of states $S$, a transition relation $\rightarrow\subseteq S\times\mathcal{L}\times S$,
and an initial state $i\in S$.
Two transition systems labeled over $\mathcal{L}$ are {\em isomorphic} if their states can be mapped one-to-one to each other, preserving transitions and initial states.
We call a relation $R\subseteq S\times S'$ a {\em bisimulation} between transition systems $T= (S,\rightarrow,i)$ and $T'= (S',\rightarrow',i')$ over $\mathcal{L}$ iff
$(i,i')\in R$, and for all $(s,s')\in R$ and $l\in \mathcal{L}$:
if $(s,l,s_1)\in\rightarrow$ then $(s',l,s'_1)\in\rightarrow'$ and $(s_1,s'_1)\in R$, for some $s'_1\in S'$;
and if $(s',l,s'_1)\in\rightarrow'$ then $(s,l,s_1)\in\rightarrow$ and $(s_1,s'_1)\in R$, for some $s_1\in S$.
Two transition systems over $\mathcal{L}$ are {\em bisimilar} if there is a bisimulation between them.

For a fixed set $L$ of actions in RPESs,
define the set $\mathbb{L}:= \mathbb{N}_0^L$ (the set of multisets over $L$, or functions from $L$ to the non-negative integers).
The set $\mathbb{L}$ will be used as the set of labels in transition systems.
\medskip

We are ready to define transition systems (labeled over $\mathbb{L}$) with configurations as states.

\begin{definition}\label{TC.def}
For an RPES $\mathcal{E}= (E, <, \sharp,  l, F, \prec, \rhd, C_0)$ over $L$,
\begin{center}
$\TC(\mathcal{E})$ is a transition system $(\Conf(\mathcal{E})$, $\rightharpoondown$, $C_0)$ over $\mathbb{L}$,
\end{center}
where $C\stackrel{M}{\rightharpoondown} C'$
iff
$C\stackrel{(A\cup\underline{B})}{\rightarrow} C'$ in $\mathcal{E}$ and $M=l(A\cup\underline{B})$\footnote{See Definition~\ref{def_conf}.}.
\end{definition}

Let us explain the above definition with

\begin{example}\label{7.examp}
Consider the cause-respecting RPES $\mathcal{E}_2$ from Examples~\ref{1.examp}--\ref{4.examp}.
In Example~\ref{1.examp}, we can see that $C^2_0=\emptyset$ and $\Conf(\mathcal{E}_2)=\{\emptyset$, $\{a\}$, $\{b\}$, $\{a,b\}\}$.
Using Definition~\ref{TC.def}, we obtain
$\rightharpoondown=\{(\emptyset,(\{a\}\cup\emptyset),\{a\})$, $(\{a\},(\emptyset \cup\{a\}),\emptyset)$, $(\{a\},(\{b\}\cup\emptyset),\{a,b\})$,
$(\emptyset,(\{b\}\cup\emptyset),\{b\})$, $(\{b\},(\{a\}\cup\emptyset),\{a,b\})$, $(\emptyset$, $(\{a,b\}\cup\emptyset)$, $\{a,b\})\}$.
A graphical representation of the configuration transition system $\TC(\mathcal{E}_2)$ is shown in Fig.~\ref{CTC2}.
\hfill$\Diamond$
\end{example}

\begin{figure}[htbp]
\begin{center}
\begin{tikzpicture}[line width=0.02cm,>={Latex[length=0.22cm,width=0.1cm]},scale=0.9]
\node[fill=blue!20,at={(1,4)}](E){$\emptyset$};
\node[fill=blue!20,at={(1,6.3)}](E-a){$\{a\}$};
\node[fill=blue!20,at={(4,4)}](E-b){$\{b\}$};
\node[fill=blue!20,at={(4,6.3)}](E-ab){$\{a,b\}$};

\path[-latex]([xshift=-0.cm]E.east)edge[]node[auto,sloped,above,inner sep=0.03cm,pos=0.5,rotate=0]{\small{$(\{a,b\}\cup\emptyset)$}}([xshift=-0.cm]E-ab.south);

\path[-latex]([xshift=-0.13 cm]E.north)edge[]node[auto,sloped,above,inner sep=0.03cm,pos=0.5,rotate=0]{\small{$(\{a\}\cup\emptyset)$}}([xshift=-0.13 cm]E-a.south);
\path[-latex]([xshift=0.13 cm]E-a.south)edge[]node[auto,sloped,below,inner sep=0.03cm,pos=0.5,rotate=180]{\small{$(\emptyset \cup\{a\})$}}([xshift=0.13 cm]E.north);
\path[-latex]([xshift=0. cm]E-a.east)edge[]node[auto,sloped,above,inner sep=0.03cm,pos=0.5,rotate=0]{\small{$(\{b\}\cup\emptyset)$}}([xshift=0.cm]E-ab.west);

\path[-latex]([yshift=-0.1 cm]E.east)edge[]node[auto,sloped,below,inner sep=0.03cm,pos=0.5,rotate=0]{\small{$(\{b\}\cup\emptyset)$}}([yshift=-0.1 cm]E-b.west);
\path[-latex]([yshift=0.1 cm]E-b.north)edge[]node[auto,sloped,below,inner sep=0.03cm,pos=0.5,rotate=0]{\small{$\!\!(\{a\}\cup\emptyset)$}}([yshift=0.cm]E-ab.south);
\end{tikzpicture}
\end{center}
\caption{The configuration transition system $\TC_{}(\mathcal{E}_2)$}\label{CTC2}
\end{figure}
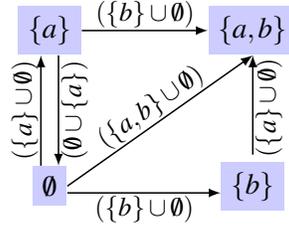

We next propose the definition of labeled transition systems over $\mathbb{L}$ with RPESs as states.
\begin{definition}\label{TR.def}

For an RPES $\mathcal{E}= (E, <, \sharp,  l, F, \prec, \rhd, C_0)$ over $L$,
\begin{center}
$\TE(\mathcal{E})$ is a transition system $(Reach(\mathcal{E})$, $\stackrel{}{\rightharpoonup}$, $\mathcal{E})$ over $\mathbb{L}$,
\end{center}
where $\mathcal{F}\stackrel{M}{\rightharpoonup}\mathcal{F}'$ iff
$\mathcal{F}'=\mathcal{F}\setminus (A\cup \underline{B})$ and $M=l(A \cup \underline{B})$,
and $Reach(\mathcal{E})=\{\mathcal{F} \mid\exists\mathcal{E}_0, \ldots, \mathcal{E}_k$ $(k\geq 0)$ s.t.
$\mathcal{E}_0 = \mathcal{E}\setminus\epsilon$, $\mathcal{E}_k=\mathcal{F}$, and
$\mathcal{E}_{i}\stackrel{l(A \cup \underline{B})}{\rightharpoonup}\mathcal{E}_{i+1}$ $(0\leq i<k)\}$.
\end{definition}

We illustrate the above definition with

\begin{example}\label{8.examp}
Consider the RPES $\mathcal{E}_2$ from Examples~\ref{1.examp}--\ref{4.examp}.
Using Definitions~\ref{def_rem} and \ref{TR.def}, we construct the residual transition system $\TE(\mathcal{E}_2)$ which is depicted in Fig.~\ref{RTC2}.
It is easy to check that the configuration transition system $\TC(\mathcal{E}_2)$ (see Fig.~\ref{CTC2}) and the residual transition system $\TE(\mathcal{E}_2)$ are bisimilar but not isomorphic.
\hfill$\Diamond$
\end{example}

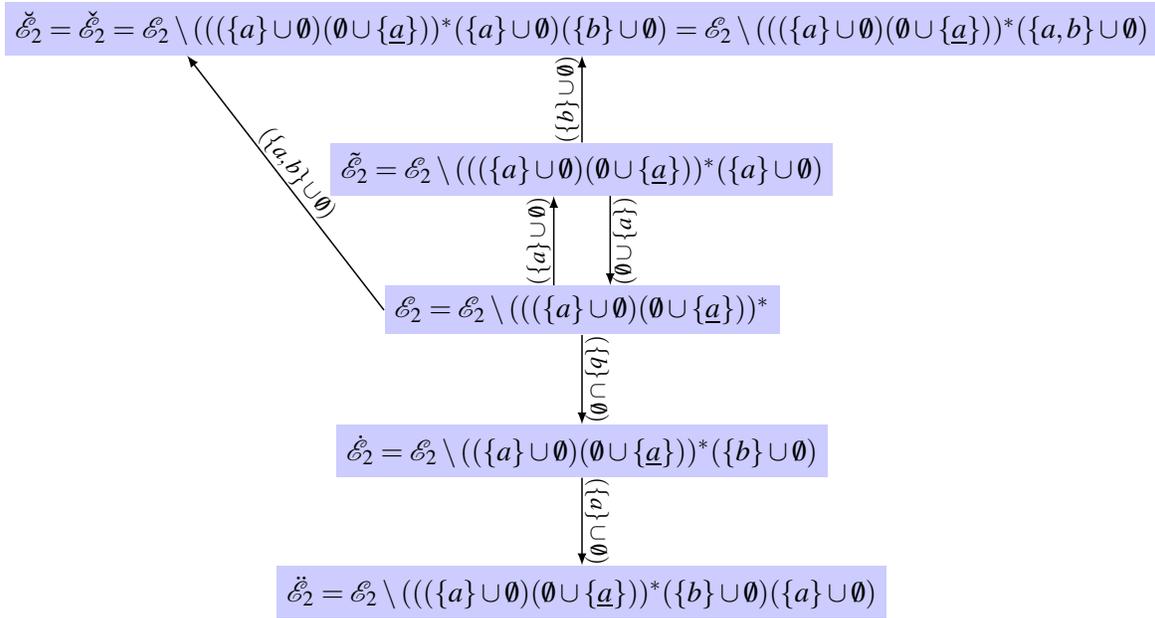
\begin{figure}[htbp]
\begin{center}
\begin{tikzpicture}[line width=0.02cm,>={Latex[length=0.22cm,width=0.1cm]},scale=0.75]
\node[fill=blue!20,at={(0,4)}](E){$\mathcal{E}_2=\mathcal{E}_2\setminus(((\{a\} \cup \emptyset)(\emptyset \cup \{\underline{a}\}))^*$};
\node[fill=blue!20,at={(0,6.5)}](E-a){$\tilde{\mathcal{E}}_2=\mathcal{E}_2\setminus(((\{a\}\cup\emptyset)(\emptyset\cup\{\underline{a}\}))^*(\{a\}\cup\emptyset)$};
\node[fill=blue!20,at={(0,1.5)}](E-b){$\dot{\mathcal{E}}_2=\mathcal{E}_2\setminus((\{a\} \cup \emptyset)(\emptyset \cup \{\underline{a}\}))^*(\{b\} \cup \emptyset)$};
\node[fill=blue!20,at={(0,9)}](E-ab){$\breve{\mathcal{E}}_2=\check{\mathcal{E}}_2=\mathcal{E}_2\setminus(((\{a\} \cup \emptyset)(\emptyset \cup \{\underline{a}\}))^*(\{a\} \cup \emptyset)(\{b\} \cup \emptyset)
=\mathcal{E}_2\setminus(((\{a\} \cup \emptyset)(\emptyset \cup \{\underline{a}\}))^*(\{a,b\} \cup \emptyset)$};
\node[fill=blue!20,at={(0,-1)}](E-ab1){$\ddot{\mathcal{E}}_2=\mathcal{E}_2\setminus(((\{a\} \cup \emptyset)(\emptyset \cup \{\underline{a}\}))^*(\{b\} \cup \emptyset) (\{a\} \cup \emptyset)$};

\path[-latex]([xshift=0.cm]E-a.north)edge[]node[auto,sloped,above,inner sep=0.03cm,pos=0.5,rotate=0]{\footnotesize{$(\{b\}\cup\emptyset)$}}([xshift=0.cm]E-ab.south);

\path[-latex]([xshift=0.5cm]E-a.south)edge[]node[auto,sloped,below,inner sep=0.03cm,pos=0.5,rotate=180]{\footnotesize{$(\emptyset\cup\{a\})$}}([xshift=0.5cm]E.north);
\path[-latex]([xshift=-0.5cm]E.north)edge[]node[auto,sloped,above,inner sep=0.03cm,pos=0.5,rotate=0]{\footnotesize{$(\{a\}\cup\emptyset)$}}([xshift=-0.5cm]E-a.south);

\path[-latex]([xshift=0.cm]E.west)edge[]node[auto,sloped,above,inner sep=0.03cm,pos=0.5,rotate=0]{\footnotesize{$(\{a,b\}\cup\emptyset)$}}([xshift=-7cm]E-ab.south);
\path[-latex]([xshift=0.cm]E.south)edge[]node[auto,sloped,above,inner sep=0.03cm,pos=0.5,rotate=0]{\footnotesize{$(\{b\}\cup\emptyset)$}}([xshift=0.cm]E-b.north);

\path[-latex]([xshift=-0.cm]E-b.south)edge[]node[auto,sloped,above,inner sep=0.03cm,pos=0.5,rotate=0]{\footnotesize{$(\{a\}\cup\emptyset)$}}([xshift=-0.cm]E-ab1.north);

\end{tikzpicture}
\end{center}
\caption{The residual transition system $\TE_{}(\mathcal{E}_2)$}\label{RTC2}
\end{figure}

We establish the relationships between the states and transitions of the configuration-based and residual-based transition systems of the RPES.
\begin{proposition}\label{prop4}
Given a cause-respecting RPES $\mathcal{E}=(E, <, \sharp,l, F, \prec, \rhd, C_0)$ over $L$,
\begin{itemize}
\item[(i)] for any $last(t) \in Conf(\mathcal{E})$, $\mathcal{E}\setminus t\in Reach(\mathcal{E})$;

\item[(ii)] for any $\mathcal{E}'\in Reach(\mathcal{E})$, there is $last(t) \in Conf(\mathcal{E})$ such that $\mathcal{E}'=\mathcal{E}\setminus t$;

\item[(iii)] for any $last(t),\ last(t') \in Conf(\mathcal{E})$,
if $last(t)\stackrel{l(A\cup \underline{B})}{\rightharpoondown} last(t')$
then $\mathcal{E}\setminus t \stackrel{l(A\cup \underline{B})}{\rightharpoonup} \mathcal{E} \setminus t (A\cup \underline{B})$ and $last(t (A\cup \underline{B}))$ $=$ $last(t')$;

\item[(iv)] for  any $\mathcal{E}',\mathcal{E}''\in Reach(\mathcal{E})$,
if $\mathcal{E}'\stackrel{l(A\cup \underline{B})}\rightharpoonup\mathcal{E}''$ then,
for any $last(t)\in Conf(\mathcal{E})$ such that $\mathcal{E}'=\mathcal{E}\setminus t$,
there is $last(t')\in Conf(\mathcal{E})$ such that $\mathcal{E}''=\mathcal{E}\setminus t'$ and $last(t)\stackrel{l(A\cup \underline{B})}{\rightharpoondown} last(t')$.
\end{itemize}
\end{proposition}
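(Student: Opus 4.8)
The plan is to prove the four items of Proposition~\ref{prop4} by exploiting the compositional properties of the removal operator (Propositions~\ref{prop1i} and~\ref{prop1ii}) together with the trace/configuration correspondence of Lemma~\ref{lem_0}. The unifying observation is that every reachable residual arises as $\mathcal{E}\setminus t$ for some trace $t$, and every configuration is $last(t)$ for some trace $t$; thus the whole statement is really a dictionary between traces viewed from the configuration side and from the residual side. I would set up this dictionary first and then read off each item.

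\textbf{Items (i) and (ii).} For (i), given $last(t)\in Conf(\mathcal{E})$ with $t=(A_1\cup\underline{B}_1)\ldots(A_n\cup\underline{B}_n)\in Traces(\mathcal{E})$, I would argue by induction on $n$ that $\mathcal{E}\setminus t\in Reach(\mathcal{E})$. The base case $n=0$ gives $\mathcal{E}\setminus\epsilon=\mathcal{E}\in Reach(\mathcal{E})$. For the step, the prefix $t'=(A_1\cup\underline{B}_1)\ldots(A_{n-1}\cup\underline{B}_{n-1})$ is again a trace (prefix-closure, stated after Definition~\ref{def_conf}), so $\mathcal{E}\setminus t'\in Reach(\mathcal{E})$ by induction; by Definition~\ref{def_rem} one more removal step yields $\mathcal{E}\setminus t=(\mathcal{E}\setminus t')\setminus(A_n\cup\underline{B}_n)$, and Definition~\ref{TR.def} then records the required transition $\mathcal{E}\setminus t'\stackrel{l(A_n\cup\underline{B}_n)}{\rightharpoonup}\mathcal{E}\setminus t$, placing $\mathcal{E}\setminus t$ in $Reach(\mathcal{E})$. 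Item (ii) is essentially the converse traversal: any $\mathcal{E}'\in Reach(\mathcal{E})$ comes, by Definition~\ref{TR.def}, from a chain $\mathcal{E}=\mathcal{E}_0\rightharpoonup\cdots\rightharpoonup\mathcal{E}_k=\mathcal{E}'$; reading off the labels gives a candidate sequence $t=(A_1\cup\underline{B}_1)\ldots(A_k\cup\underline{B}_k)$ with $\mathcal{E}'=\mathcal{E}\setminus t$, and I must check $t\in Traces(\mathcal{E})$ so that $last(t)\in Conf(\mathcal{E})$ by Lemma~\ref{lem_0}(i). The point here is that each residual step corresponds to an actual enabled step, which is where Proposition~\ref{prop1i} is used to glue the individual steps into a genuine trace of $\mathcal{E}$.

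\textbf{Items (iii) and (iv).} For (iii), the hypothesis $last(t)\stackrel{l(A\cup\underline{B})}{\rightharpoondown}last(t')$ unfolds via Definition~\ref{TC.def} to $last(t)\stackrel{A\cup\underline{B}}{\rightarrow}last(t')$ in $\mathcal{E}$. By Lemma~\ref{lem_0}(iii) this gives $t(A\cup\underline{B})\in Traces(\mathcal{E})$ together with $last(t(A\cup\underline{B}))=last(t')$, which is the second conclusion. For the first conclusion I apply the single-step form of Definition~\ref{def_rem}, namely $\mathcal{E}\setminus t(A\cup\underline{B})=(\mathcal{E}\setminus t)\setminus(A\cup\underline{B})$, so Definition~\ref{TR.def} yields the transition $\mathcal{E}\setminus t\stackrel{l(A\cup\underline{B})}{\rightharpoonup}\mathcal{E}\setminus t(A\cup\underline{B})$. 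Item (iv) is the harder matching direction: from a residual transition $\mathcal{E}'\stackrel{l(A\cup\underline{B})}{\rightharpoonup}\mathcal{E}''$ and a chosen $t$ with $\mathcal{E}'=\mathcal{E}\setminus t$, I must produce a configuration-side witness $last(t')$. Here I use that the residual step means $(A\cup\underline{B})$ is enabled at the residual $\mathcal{E}'$, hence (by Proposition~\ref{prop1i}) $t(A\cup\underline{B})\in Traces(\mathcal{E})$; taking $t'=t(A\cup\underline{B})$ gives $\mathcal{E}''=\mathcal{E}\setminus t'$, and Lemma~\ref{lem_0}(ii) supplies $last(t)\stackrel{A\cup\underline{B}}{\rightarrow}last(t')$, which under Definition~\ref{TC.def} is exactly $last(t)\stackrel{l(A\cup\underline{B})}{\rightharpoondown}last(t')$.

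\textbf{Main obstacle.} I expect the crux to lie in item (iv), specifically in certifying that a transition existing at the residual level ($(A\cup\underline{B})$ enabled in $\mathcal{E}\setminus t$) lifts back to an enabled step of the original $\mathcal{E}$ at the configuration $last(t)$. The removal operator discards conflicting events and demotes reversible events to irreversible ones, so the enabling conditions (a)--(d) of Definition~\ref{def_conf} are evaluated against the pruned relations $<^i,\sharp^i,\prec^i,\rhd^i,F^i$ rather than the originals; the delicate verification is that enabledness is preserved in both directions under this pruning. This is precisely what Proposition~\ref{prop1i} packages, together with the structural invariants of Lemma~\ref{lem_rem_0} (notably $C_0^n=C_n\cap E^n$ in (vi) and the monotonicity in (i)), which relate the residual's initial configuration $C_0^n$ to the reached configuration $C_n=last(t)$ and guarantee that the causes, conflicts, reverse causes, and preventions seen by the residual agree with those governing the step in $\mathcal{E}$. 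Once this correspondence is in hand, all four items follow by the trace bookkeeping described above.
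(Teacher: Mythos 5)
The paper itself defers the proof of Proposition~\ref{prop4} to its external proof appendix, so only the scaffolding is visible in the text; your proposal is correct and uses exactly that scaffolding (Lemma~\ref{lem_0}, Definition~\ref{def_rem}, Propositions~\ref{prop1i} and~\ref{prop1ii}), which is clearly the intended route. One citation, however, needs to be straightened out, and it is not cosmetic. In items (i) and (iii), before Definition~\ref{TR.def} can ``record'' the transition $\mathcal{E}\setminus t\stackrel{l(A\cup\underline{B})}{\rightharpoonup}(\mathcal{E}\setminus t)\setminus(A\cup\underline{B})$, you must certify that $(A\cup\underline{B})$ is itself a trace of the residual $\mathcal{E}\setminus t$: the removal operator of Definition~\ref{def_rem} is defined only on traces of the structure it is applied to, so this is a genuine premise rather than a formality (and if it were dropped, $Reach(\mathcal{E})$ would contain spurious states and item (iv) would fail). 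That premise is precisely Proposition~\ref{prop1ii} (a suffix of a trace of $\mathcal{E}$ is a trace of the residual), not ``the single-step form of Definition~\ref{def_rem}'' as you write in item (iii), and not Proposition~\ref{prop1i}, which your ``main obstacle'' paragraph credits with handling enabledness ``in both directions.'' In fact the two propositions split the work: Proposition~\ref{prop1ii} transfers enabledness from $\mathcal{E}$ down to the residual (needed in items (i) and (iii)), while Proposition~\ref{prop1i} lifts enabledness from the residual back up to $\mathcal{E}$ and supplies the identity $\mathcal{E}\setminus t(A\cup\underline{B})=(\mathcal{E}\setminus t)\setminus(A\cup\underline{B})$ (which you correctly use in items (ii) and (iv)). With the citations placed where they belong, all four items go through exactly as you describe.
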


\begin{theorem}\label{main.th}
Given a cause-respecting RPES $\mathcal{E}$ over $L$, $\TC(\mathcal{E})$ and $\TR(\mathcal{E})$ are bisimilar and in general not isomorphic.
\end{theorem}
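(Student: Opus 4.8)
The plan is to exhibit one explicit bisimulation $R$ between $\TC(\mathcal{E})$ and $\TR(\mathcal{E})$ and then to refute isomorphism by a single counterexample. The guiding intuition is that a configuration should be identified with the residual obtained by removing \emph{any} trace that reaches it, so I would take $R=\{(last(t),\,\mathcal{E}\setminus t)\mid t\in Traces(\mathcal{E})\}\subseteq \Conf(\mathcal{E})\times Reach(\mathcal{E})$. First I would check that $R$ is well-typed and links the initial states: by Lemma~\ref{lem_0}(i) every configuration has the form $last(t)$, and by Proposition~\ref{prop4}(i) each $\mathcal{E}\setminus t$ indeed lies in $Reach(\mathcal{E})$, so $R$ really is a relation between the two state sets; taking $t=\epsilon$ gives $last(\epsilon)=C_0$ and $\mathcal{E}\setminus\epsilon=\mathcal{E}$, hence $(C_0,\mathcal{E})\in R$ matches the initial states of $\TC(\mathcal{E})$ and $\TR(\mathcal{E})$.

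The two transfer conditions of a bisimulation then fall out directly from Proposition~\ref{prop4}. For the forward direction, assume $(last(t),\mathcal{E}\setminus t)\in R$ and a move $last(t)\stackrel{M}{\rightharpoondown} last(t')$ of $\TC(\mathcal{E})$ with $M=l(A\cup\underline{B})$; Proposition~\ref{prop4}(iii) supplies the matching move $\mathcal{E}\setminus t \stackrel{M}{\rightharpoonup}\mathcal{E}\setminus t(A\cup\underline{B})$ together with $last(t(A\cup\underline{B}))=last(t')$, and since $t(A\cup\underline{B})\in Traces(\mathcal{E})$ the resulting pair $(last(t'),\mathcal{E}\setminus t(A\cup\underline{B}))$ is again in $R$. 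For the backward direction, assume $(last(t),\mathcal{E}\setminus t)\in R$ and a move $\mathcal{E}\setminus t\stackrel{M}{\rightharpoonup}\mathcal{E}''$ of $\TR(\mathcal{E})$; Proposition~\ref{prop4}(iv) produces some $last(t')\in\Conf(\mathcal{E})$ with $\mathcal{E}''=\mathcal{E}\setminus t'$ and $last(t)\stackrel{M}{\rightharpoondown} last(t')$, so $(last(t'),\mathcal{E}'')\in R$. This shows $R$ is a bisimulation, so the two systems are bisimilar.

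For the non-isomorphism claim it suffices to exhibit one cause-respecting RPES whose two systems differ up to isomorphism, and the RPES $\mathcal{E}_2$ of Examples~\ref{1.examp}--\ref{8.examp} already witnesses this. I would simply compare state counts: $\TC(\mathcal{E}_2)$ has the four configurations $\emptyset,\{a\},\{b\},\{a,b\}$ (Fig.~\ref{CTC2}), whereas $\TR(\mathcal{E}_2)$ has five distinct reachable residuals (Fig.~\ref{RTC2}), because the removal operator collapses some but not all traces ending in the same configuration. Concretely, the single configuration $\{a,b\}$ is reached both by $t=(\{a\}\cup\emptyset)(\{b\}\cup\emptyset)$, giving the residual $\breve{\mathcal{E}}_2$ with event set $\{a\}$, and by $t'=(\{b\}\cup\emptyset)(\{a\}\cup\emptyset)$, giving the residual $\ddot{\mathcal{E}}_2$ with empty event set, and these residuals are not isomorphic. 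A state-bijection preserving transitions is therefore impossible, so the systems are bisimilar but not isomorphic in general.

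I expect the genuinely hard work to lie not in this theorem but in Proposition~\ref{prop4}, on which the whole argument rests; once those four correspondences between configurations/transitions and residuals/transitions are established, assembling the bisimulation is routine. The one subtlety to keep in mind while verifying the transfer conditions is that $R$ is deliberately \emph{not} required to be a bijection -- indeed it is not, since the counterexample shows a configuration related to two non-isomorphic residuals, and by Proposition~\ref{prop4}(ii) a residual may be reached by traces ending in different configurations. Thus one must invoke Lemma~\ref{lem_0}(i) and Proposition~\ref{prop4}(i)--(ii) precisely to confirm that $R$ still covers both state sets and that every witness produced in the transfer steps is a bona fide element of $R$.
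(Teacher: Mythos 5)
Your proposal is correct and follows essentially the same route as the paper: the same relation $R=\{(last(t),\mathcal{E}\setminus t)\mid t\in Traces(\mathcal{E})\}$, the same appeals to Lemma~\ref{lem_0}(i) and Proposition~\ref{prop4}(i), (iii), (iv) for well-definedness and the two transfer conditions, and the same counterexample $\mathcal{E}_2$ for non-isomorphism. Your explicit state-count argument (four configurations versus five residuals, with $\{a,b\}$ related to the non-isomorphic residuals $\breve{\mathcal{E}}_2$ and $\ddot{\mathcal{E}}_2$) just spells out what the paper delegates to Example~\ref{8.examp}.
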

\begin{proof}

From Example~\ref{8.examp} we know that, for the cause-respecting RPES $\mathcal{E}_2$, $\TC(\mathcal{E}_2)$ and $\TR(\mathcal{E}_2)$ are not isomorphic.

We shall check that $\TC(\mathcal{E})$ and $\TR(\mathcal{E})$ are bisimilar for an arbitrary cause-respecting RPES $\mathcal{E}= (E$, $<$, $\sharp$, $L$, $l$, $F$, $\prec$, $\rhd$, $C_0)$.
Due to Lemma~\ref{lem_0}(i) and Propositions~\ref{prop4}(i), we can define a relation $R \subseteq \Conf(\mathcal{E})\times Reach(\mathcal{E})$ as follows:
$R=\{(last(t),\mathcal{E}\setminus t)\mid t\in Traces(\mathcal{E})\}$.

We need to show that $R$ is a bisimulation between $\TC(\mathcal{E})$ and $\TE(\mathcal{E})$.
Clearly, we have that $\epsilon \in Traces(\mathcal{E})$, and, moreover, $C_0 = last(\epsilon)\in\Conf(\mathcal{E})$ and $\mathcal{E} = \mathcal{E}\setminus \epsilon\in Reach(\mathcal{E})$.
So, $(C_0, \mathcal{E})\in R$ holds.
Take an arbitrary $(last(t), \mathcal{E}\setminus t)\in R$.
Suppose that $last(t)\stackrel{l(A\cup\underline{B})}\rightharpoondown C'$ in $\TC(\mathcal{E})$ for some $C' \in Conf(\mathcal{E})$.
By Lemma~\ref{lem_0}(i), there is $t'\in Traces(\mathcal{E})$ such that $C'=last(t')$.
According to Proposition~\ref{prop4}(iii), it is true that
$\mathcal{E}\setminus t \stackrel{l(A\cup \underline{B})}\rightharpoonup \mathcal{E}\setminus t(A\cup \underline{B})$ and $last(t(A\cup \underline{B})) = last(t')$.
Thanks to Lemma~\ref{lem_0}(i), we have $t(A\cup \underline{B})\in Traces(\mathcal{E})$.
Hence, $(C'= last(t(A\cup \underline{B})), \mathcal{E}\setminus t(A\cup \underline{B}))\in R$ holds.
In the opposite direction, assume that $\mathcal{E}\setminus t \stackrel{l(A\cup \underline{B})}\rightharpoonup \mathcal{E}'$ in $\TE(\mathcal{E})$ for some $\mathcal{E}' \in Reach(\mathcal{E})$.
Due to Propositions~\ref{prop4}(iv), for $last(t)\in Conf(\mathcal{E})$,
there is $last(t')\in Conf(\mathcal{E})$ such that $\mathcal{E}' = \mathcal{E}\setminus t'$ and $last(t) \stackrel{l(A\cup \underline{B})}\rightharpoondown last(t')$.
Due to Lemma~\ref{lem_0}(i), $t'\in Traces(\mathcal{E})$ is true.
This implies that $(last(t'), \mathcal{E}\setminus t' = \mathcal{E}')\in R$ holds.
Hence, $R$ is indeed a bisimulation.
\end{proof}

\section{Concluding Remarks}

In this paper, we dealt with two different --  configuration-based and residual-based -- ways of giving (step) transition system semantics for cause-respecting reversible prime event structures
which encompass prime event structures.
For this purpose, we firstly defined (step) semantics from \cite{PU15}, which is based on configurations/traces
obtained by starting with the initial configuration and by executing events and/or undoing previously executed events,
and, secondly, developed a removal operator which is useful for constructing residuals (model fragments) by retaining an appropriate amount of structure during the execution of the model.
We also stated some correctness criteria for the removal operator.
The meaning of the correctness properties is that the obtained residuals do not allow configurations/traces that are disallowed by the original structure.
Also, in some sense, this signifies some compositionality properties of the removal operator.
It turned out that in the context of PESs, the removal operator developed here produces the same residuals as the removal operator proposed in \cite{MR98}.
As our main result, we have obtained a (step) bisimulation between configuration-based and residual-based transition systems of the models under consideration.
The configuration-based method discussed here can be useful in analyzing the state space of reversible concurrent systems whose behavior is represented as RPESs, and
the proposed residual-based method can be suitable for specification and visualization of changes in the structures of reversible concurrent processes during their simulation in tools.
Due to the good compositionality properties of the residual-based transition systems of RPESs and their complementarity and consistency with the configuration-based ones,
it is hoped that the results obtained here may be helpful in demonstrating the correspondence between operational and denotational semantics
of algebraic calculi of reversible concurrent processes, similar to how the results from \cite{BM94,B90,K96} have found their application in traditional (irreversible) process algebras.

As for future work, we plan to broaden the list of studied models by adding flow/bundle/general event structures with symmetric and asymmetric conflict.
Work on extending our approach to out-of-causal reversible prime event structures is under way and has yielded promising intermediate results.
Another future line of our research is to generalize the model of reversible prime event structures with non-executable (impossible) events
(for example, by dropping the transitivity/acyclicity of causality, as well as the principles of finite causes)
in order to obtain isomorphisms between the two types of transition systems of the models, as was done for the corresponding extension of PESs in the paper \cite{BGV18}.
There, the authors have been able to argue that non-executable events are useful in comparative semantics, facilitating the elimination of non-fundamental inconsistencies between models.
Furthermore, isomorphisms between the transition system semantics are expected to allow one
to relate those constructed on configurations and those derived from denotational semantics of process calculi in a tight way.

\bibliographystyle{eptcs}
\bibliography{genpapers.bib}

\end{document}